\numberwithin{equation}{section}
\newcommand{\cE}{\mathcal{E}}
\newcommand{\cH}{\mathcal{H}}
\newcommand{\cJ}{\mathcal{J}}
\newcommand{\cO}{\mathcal{O}}
\newcommand{\cP}{\mathcal{P}}
\newcommand{\cS}{\mathcal{S}}
\newcommand{\cT}{\mathcal{T}}
 \let\b=\beta         \let\d=\delta     \let\e=\varepsilon
           \let\ka=\kappa    \let\la=\lambda
\let\m=\mu                          
\let\s=\sigma \let\t=\tau            
   \let\o=\omega     
 \let\D=\Delta
\newcommand{\ee}{\mathrm{e}}
\newcommand{\ii}{\mathrm{i}}
\newcommand{\dd}{\mathrm{d}}
\def\vF{v_{F}}
\def\kF{k_{F}}
\def\vB{v_{B}}
\def\io{\infty}
\def\eps{\epsilon}
\def\sgn{\operatorname{sgn}}
\def\Tr{\operatorname{Tr}}
\def\Map{\operatorname{Map}}
\def\Vect{\operatorname{Vect}}
\def\Diff{\operatorname{Diff}}
\def\wDiff{\widetilde{\operatorname{Diff}}}
\def\vir{\mathcal{V}\mathfrak{ir}}
\def\Id{\operatorname{Id}}
\def\hc{\mathrm{h.c.}}
\newcommand{\pdag}{^{\vphantom{\dagger}}}
\newcommand{\ppr}{^{\vphantom{\prime}}}
\newcommand{\fermionWick}[1]{\left. :\! \hspace{-0.5pt} #1 \hspace{-0.5pt} \!: \right.}
\renewcommand{\Re}{\operatorname{Re}}
\newtheorem{theorem}{Theorem}[section]
\newtheorem{lemma}[theorem]{Lemma}
\newtheorem{proposition}[theorem]{Proposition}
\theoremstyle{definition}
\newtheorem{example}[theorem]{Example}
\theoremstyle{remark}
\newtheorem{remark}[theorem]{Remark}
\begin{document}


\title{%
Inhomogeneous Conformal Field Theory Out of Equilibrium%
}%

\renewcommand\Authfont{\normalsize}
\author[]{%
\vspace{-2pt}%
Per Moosavi%
\thanks{\,\texttt{pmoosavi@phys.ethz.ch}}%
}%

\renewcommand\Affilfont{\footnotesize}
\affil[]{%
\vspace{-8pt}%
Institute for Theoretical Physics, ETH Zurich, Wolfgang-Pauli-Strasse 27, 8093 Z{\"u}rich, Switzerland%
\vspace{-5pt}%
}%

\date{%
\vspace{-10pt}%
\normalsize%
February 23, 2024%
\vspace{-20pt}%
}%


\maketitle


\vspace{-2mm}
\begin{center}
\emph{Dedicated to the memory of Krzysztof Gaw\k{e}dzki.}
\end{center}
\vspace{-2mm}


\begin{abstract}
We study the non-equilibrium dynamics of conformal field theory (CFT) in 1+1 dimensions with a smooth position-dependent velocity $v(x)$ explicitly breaking translation invariance. Such inhomogeneous CFT is argued to effectively describe 1+1-dimensional quantum many-body systems with certain inhomogeneities varying on mesoscopic scales. Both heat and charge transport are studied, where, for concreteness, we suppose that our CFT has a conserved U$(1)$ current. Based on projective unitary representations of diffeomorphisms and smooth maps in Minkowskian CFT, we obtain a recipe for computing the exact non-equilibrium dynamics in inhomogeneous CFT when evolving from initial states defined by smooth inverse-temperature and chemical-potential profiles $\beta(x)$ and $\mu(x)$. Using this recipe, the following exact analytical results are obtained: (i) the full time evolution of densities and currents for heat and charge transport, (ii) correlation functions for components of the energy-momentum tensor and the U$(1)$ current as well as for any primary field, and (iii) the thermal and electrical conductivities. The latter are computed by direct dynamical considerations and alternatively using a Green-Kubo formula. Both give the same explicit expressions for the conductivities, which reveal how inhomogeneous dynamics opens up the possibility for diffusion as well as implies a generalization of the Wiedemann-Franz law to finite times within CFT.
\end{abstract}


\section{Introduction}
\label{Sec:Introduction}


Conformal field theory (CFT) is routinely used to effectively describe universal properties of quantum many-body systems in equilibrium \cite{Cardy:1988}.
Well-known examples include spin chains in the gapless regime at low temperatures and edge currents associated with quantum Hall systems.
The tools of CFT are particularly useful in 1+1 dimensions owing to that the conformal group is infinite dimensional \cite{BPZ}.
Still, it is only recently that this has been used to study collective non-equilibrium properties of 1+1-dimensional quantum many-body systems.

A convenient procedure to theoretically study quantum systems out of equilibrium is to consider the dynamics after a quantum quench.
One such example is the partitioning protocol, where the time evolution is studied starting from an initial state produced by glueing together two semi-infinite systems independently in equilibrium with different thermodynamic variables, such as different temperatures and/or chemical potentials.
This was studied within CFT in, e.g., \cite{CaCa1, BeDo1, BeDo2, BDV, GaTa, CaCa2} among others.
Another example is the smooth-profile protocol used in \cite{LLMM1, LLMM2, GLM, GaKo, SoCa}, where the time evolution is studied starting from initial states defined by smooth inhomogeneous profiles generalizing the usual constant thermodynamic variables.
Recently, there is active interest in extending these kinds of non-equilibrium studies to systems where also the time evolution is inhomogeneous \cite{Kat2, WRL, ADSV, DSVC, DSC, BrDu, WeWu1, RBD, LaMo2, ABF, BCRLM}.

In this paper, we define a family of inhomogeneous models that we refer to as inhomogeneous CFT and study their non-equilibrium properties.
By this, we mean a two-dimensional Minkowskian CFT with spatial translation invariance explicitly broken by replacing the usual constant propagation velocity $v$ by a smooth function $v(x)$ that depends on position $x$.
The Hamiltonian for such a system of finite length $L$ (with periodic boundary conditions) is
\begin{equation}
\label{H_iCFT_introduction}
H
= \int_{-L/2}^{L/2} \dd x\, v(x) [T_{+}(x) + T_{-}(x)],
\end{equation}
where $T_{\pm}(x) = T_{\pm}(x+L)$ are the right- and left-moving components of the energy-momentum tensor in light-cone coordinates (see Sect.~\ref{Sec:Prerequisites} for details) and $v(x) = v(x+L) > 0$.
[Standard CFT is recovered by setting $v(x) = v$.]
Such models have been proposed to effectively describe, for instance, quantum spin chains with certain inhomogeneities varying on mesoscopic length scales, quantum gases in harmonic traps, arctic-circle phenomena, and quantum generalized hydrodynamics \cite{ADSV, DSVC, DSC, RBD, LaMo2, RCDD}.
The first is illustrated in Fig.~\ref{Fig:Inhomogeneous_CFT} for a quantum $XXZ$ spin chain (in the gapless regime and close to half filling%
\footnote{%
But not exactly at half filling (cf.\ Footnote~\ref{Footnote:K_and_Delta}).%
})
with uniformly varying couplings.
Indeed, one can (heuristically) show that this spin chain is effectively described by an inhomogeneous version of the Luttinger model \cite{Tom, Lut, MaLi} with local (point-like) interactions, see, e.g., \cite{Moo}.
The latter will serve as our main example of an inhomogeneous CFT.

\begin{figure}[!htbp]

\centering
\includegraphics[scale=1, trim=14 57 10 20, clip=true]{./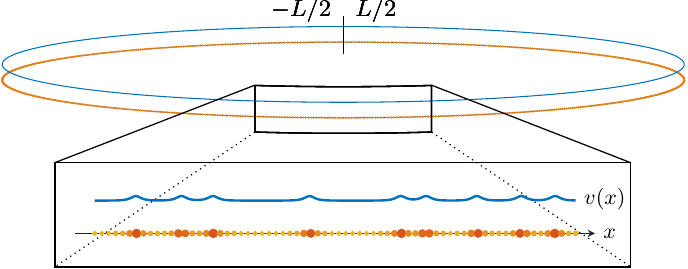}

\caption{%
Illustration of an inhomogeneous CFT with a fixed position-dependent velocity $v(x)$ effectively describing a quantum $XXZ$ spin chain with couplings $J_{j}^{x} = J_{j}^{y} = J_{j}$ and $J_{j}^{z} = J_{j} \D$ (for constant $\D$) between spins on adjacent sites at $x_{j}$ and $x_{j+1}$ uniformly varying on mesoscopic length scales much larger than the lattice spacing but much smaller than the system size.
The spatial dependence of $v(x)$ is directly related to that of the couplings $J_{j}$
(see Section~5.2 in \cite{Moo}), and the color and size of the dots indicate the magnitude of the latter.%
}
\label{Fig:Inhomogeneous_CFT}

\end{figure}

To study both heat and charge transport, for concreteness, we suppose that our CFT has a conserved $\mathrm{U}(1)$ current.
The associated total conserved charge is denoted $Q = \int_{-L/2}^{L/2} \dd x\, [J_{+}(x) + J_{-}(x)]$ where $J_{\pm}(x) = J_{\pm}(x+L)$ are the right- and left-moving components of the $\mathrm{U}(1)$ current in light-cone coordinates (see Sect.~\ref{Sec:Prerequisites} for details).
This is further motivated by that the inhomogeneous local Luttinger model mentioned above is an example of such a CFT.

One purpose of this paper is to lay the mathematical foundations for \cite{LaMo2}, where we studied inhomogeneous CFT with $v(x)$ given by a Gaussian random function.
For such a random CFT, we showed in \cite{LaMo2} that there are both normal and anomalous diffusive contributions to heat transport on top of the usual ballistic one that is the sole contribution in standard CFT.
We mention that the diffusive effect due to the type of randomness in \cite{LaMo2} was recently demonstrated numerically for random quantum spin chains in \cite{AGV} using generalized hydrodynamics \cite{CDY, BCNF}.
This makes clear that the generalization to the inhomogeneous dynamics given by $H$ in \eqref{H_iCFT_introduction} is important as it opens up a mechanism for diffusion within CFT.
Importantly, by generalizing to transport of both heat and charge, we also show that inhomogeneous CFT features a generalization of the Wiedemann-Franz law for finite times.
We mention also that inhomogeneous CFT has close connections to recent works on entanglement Hamiltonians \cite{CaTo}, stochastic CFT \cite{BeDo4, BeDou}, holographic dualities and BTZ black holes \cite{JoLi, MLNR}, and Floquet systems \cite{WeWu2, LCTTNC1, FGVW, HaWe, LapMoo}.

As a final remark, we emphasize that most papers use Euclidean CFT.
One supplementary purpose of this paper is to demonstrate the simplicity and beauty of the Minkowskian theory, which are particularly true when one studies non-equilibrium properties.


\subsection{Projective unitary representations and a non-equilibrium recipe}


As mentioned, we consider CFT in two-dimensional Minkowski space, with the spatial dimension compactified to a circle.
The conformal transformations in this case consist of orientation-preserving diffeomorphisms of the circle.
Recall that the latter form an infinite-dimensional Lie group and that the central extension of the corresponding Lie algebra is the famous Virasoro algebra.
This is important since our use of the full Virasoro algebra makes clear that inhomogeneous CFT contains the recently introduced sine-square-deformed (SSD) CFT as a special case, see, e.g., \cite{Kat2, WRL}, since that only requires the finite-dimensional subalgebra $\mathfrak{sl}(2)$.
For later reference, we introduce this special case as the regularized deformation
\begin{equation}
\label{v_x_gSSD_CFT}
v(x) = v\{ 1 + g[2\cos^2(\pi x/L) - 1] \},
\qquad
v \in \mathbb{R}^{+}, \; g \in [0, 1),
\end{equation}
from which SSD CFT is obtained in the limit $g \to 1^{-}$ (see Remark~\ref{Remark:gSSD_CFT} for further details).%
\footnote{%
The reason for the name becomes apparent by changing coordinate from $x$ to $x - L/2$.%
}
Lastly, we note that our theory is also assumed to have a $\mathfrak{u}(1)$-current algebra, appearing as the central extension of the Lie algebra corresponding to the conserved $\mathrm{U}(1)$ current.

The tools we will present are based on projective unitary representations of the above-mentioned diffeomorphism group and the group of real-valued smooth maps on the circle.
Such methods were used in \cite{GLM} to study the homogeneous time evolution in standard CFT starting from inhomogeneous initial states defined by inverse-temperature and chemical-potential profiles.
The physical setup in \cite{GLM} can be interpreted as a quantum quench from an inhomogeneous system to a homogeneous one.
Here we consider the more general case where both the initial state and the Hamiltonian driving the time evolution are inhomogeneous.

To be more specific, given a smooth inverse-temperature profile $\b(x) = \b(x+L) > 0$ and a smooth chemical-potential profile $\m(x) = \m(x+L)$, let
\begin{equation}
\label{G_iCFT_introduction}
G
= \int_{-L/2}^{L/2} \dd x\, \b(x)
	\Bigl\{
		v(x) \bigl[ T_{+}(x) + T_{-}(x) \bigr]
		- \m(x) \bigl[ J_{+}(x) + J_{-}(x) \bigr]
	\Bigr\}
\end{equation}
be an operator defining a non-equilibrium initial state in the sense that it replaces the combination $\b (H - \m Q)$ with a constant inverse temperature $\b$ and chemical potential $\m$ in the usual Gibbs measure \cite{LLMM2, GLM}.
(We set $\hbar = k_B = 1$ for simplicity.)
We are interested in expectations of the form
\begin{equation}
\label{cO_j_product}
\langle \cO_{1}(t_{1}) \ldots \cO_{n}(t_{n}) \rangle_{\mathrm{neq}}
= \frac{
		\Tr \bigl[
			\ee^{-G} \cO_{1}(t_{1}) \ldots \cO_{n}(t_{n})
		\bigr]
	}{
		\Tr \bigl[ \ee^{-G} \bigr]
	}
\end{equation}
for local operators $\cO_{j}(t_{j}) = \ee^{\ii H t_{j}} \cO_{j} \ee^{-\ii H t_{j}}$ ($j = 1, \ldots, n$) evolving under the inhomogeneous dynamics given by $H$ in \eqref{H_iCFT_introduction}.
Here, locality means that each operator can be expressed as an integral of an operator with finite support (other notions in the literature include quasi- and pseudo-locality), see, e.g., \cite{Doy}.
In principle, these operators can otherwise be arbitrary, but, for simplicity, we will
restrict ourselves to the ``algebra of operators'' generated by the components of the energy-momentum tensor and the $\mathrm{U}(1)$ current together with so-called primary fields with products restricted to non-coincident points in space.

The main proposition (Proposition~\ref{Proposition:Main}) in this paper is a recipe that allows one to compute non-equilibrium expectations of the form in \eqref{cO_j_product} by mapping them to equilibrium ones:%
\begin{equation}
\label{recipe_introduction}
\langle \cO_{1}(t_{1}) \ldots \cO_{n}(t_{n}) \rangle_{\mathrm{neq}}
= \frac{
		\Tr \bigl[
			\ee^{-\b_{0} (H_{0} - \m_{0} Q_{0}) }
			\tilde{\cO}_{1}(t_{1})
			\ldots
			\tilde{\cO}_{n}(t_{n})
		\bigr]
	}{
		\Tr \bigl[ \ee^{-\b_{0} (H_{0} - \m_{0} Q_{0})} \bigr]
	}
\end{equation}
with $H_{0} = \int_{-L/2}^{L/2} \dd x\, v_{0} [T_{+}(x) + T_{-}(x)]$ and $Q_{0} = Q$ for constants $v_{0} > 0$, $\b_{0} > 0$, and $\m_{0} \in \mathbb{R}$ and transformed operators $\tilde{\cO}_{j}(t_{j})$ evolving in a non-trivial way due to the inhomogeneous dynamics.
More precisely, $v_{0}$, $\b_{0}$, $\m_{0}$, and $\tilde{\cO}_{j}$ are given by explicit formulas involving $v(x)$, $\b(x)$, $\m(x)$, and $\cO_{j}$ with the non-trivial time evolution encoded in a natural generalization of the usual light-cone coordinates.
The key to this is to flatten out the profiles and the velocity using diffeomorphisms and smooth maps represented on the Hilbert space of the theory, in generalization of \cite{GLM} for homogeneous dynamics.
This recipe is powerful since, using known results for standard CFT in the literature, one can compute the r.h.s.\ of \eqref{recipe_introduction} in many cases by exact analytical means, not only in the thermodynamic limit $L \to \io$ but also for $L < \io$.
In particular, as $L \to \io$, which will be our focus, the constants $v_{0}$, $\b_{0}$, and $\m_{0}$ will not directly appear in any results.
This agrees with the physical intuition that infinite-volume results for local operators should only depend locally on $v(x)$, $\beta(x)$, and $\mu(x)$ (see Remark~\ref{Remark:v_0_b_0_m_0} for more details).
Note also that \eqref{recipe_introduction} becomes a recipe for computing equilibrium expectations in inhomogeneous CFT by setting $\b(x) = \b$ and $\m(x) = \m$.


\subsection{Summary of results and Wiedemann-Franz law for finite times}


As applications of our non-equilibrium recipe, we derive the following exact analytical results:%
\begin{enumerate}[leftmargin=2.4em, label={(\roman*)}]

\item
\label{Item:i}
The full time evolution of expectations of the form in \eqref{cO_j_product} for the densities and currents associated with heat and charge transport.

\item
\label{Item:ii}
Correlation functions of the form in \eqref{cO_j_product} for components of the energy-momentum tensor and the $\mathrm{U}(1)$ current as well as for any primary fields.
In particular, fully explicit expressions for current-current correlation functions for any inhomogeneous CFT and two-point fermion correlation functions for the inhomogeneous local Luttinger model.

\item
\label{Item:iii}
The thermal and electrical conductivities $\ka_{\mathrm{th}}(\omega)$ and $\s_{\mathrm{el}}(\o)$ as functions of frequency $\omega$.

\end{enumerate}

The conductivities in \ref{Item:iii} are defined as linear-response functions \cite{Kubo} and computed in two ways.
First, dynamically using the explicit expressions for the non-equilibrium expectations for the currents in \ref{Item:i} in the case of kink-like profiles $\b(x)$ and $\m(x)$.
Second, using a Green-Kubo formula for inhomogeneous systems, where the ingredients are the equilibrium current-current correlation functions obtained as special cases of the results in \ref{Item:ii}.
The latter was alluded to but far from properly explained in \cite{LaMo2}, where the explicit expression for $\ka_{\mathrm{th}}(\omega)$ constituted one of two approaches used to show that heat transport acquires diffusive contributions in random CFT.
Using general arguments (see Appendix~\ref{App:Linear-response_theory}), the dynamical and the Green-Kubo approaches must give the same results.
However, the dynamical one turns out to be more direct and makes clear the remarkable role of a quantum anomaly for the final expression for $\ka_{\mathrm{th}}(\omega)$.
This anomaly corresponds to a Schwarzian-derivative term that appears ubiquitously in CFT, but this origin would not be evident from the Green-Kubo approach.

To understand the physical significance of the quantum anomaly (in regard to the conductivities), note that, on general grounds, see, e.g., \cite{GLM, Spo},
\begin{subequations}
\label{ka_th_s_el}
\begin{align}
\Re \ka_{\mathrm{th}}(\o)
& = D_{\mathrm{th}} \pi \d(\o)
		+ \Re \ka_{\mathrm{th}}^{\mathrm{reg}}(\o),
		\label{ka_th} \\
\Re \s_{\mathrm{el}}(\o)
& = D_{\mathrm{el}} \pi \d(\o)
		+ \Re \s_{\mathrm{el}}^{\mathrm{reg}}(\o),
		\label{s_el}
\end{align}
\end{subequations}
where $D_{\mathrm{th}}$ and $D_{\mathrm{el}}$ are the thermal and electrical Drude weights and
$\Re \ka_{\mathrm{th}}^{\mathrm{reg}}(\o)$ and $\Re \s_{\mathrm{el}}^{\mathrm{reg}}(\o)$ are the remaining real regular parts.
The explicit expressions that we derive (see Sect.~\ref{Sec:Applications:Conductivities:Thermal_and_electrical_conductivities}) imply that%
\footnote{%
The second formula assumes that $v(x)$ is not constant since otherwise $\Re \ka_{\mathrm{th}}^{\mathrm{reg}}(\o)$ and $\Re \s_{\mathrm{el}}^{\mathrm{reg}}(\o)$ would be identically zero.%
}
\begin{equation}
\label{Wiedemann-Franz}
\frac{\ka}{c} 
\frac{D_{\mathrm{th}}}
	{D_{\mathrm{el}}}
= \frac{\pi^2}{3 \b},
\qquad
\frac{\ka}{c} 
\frac{\Re \ka_{\mathrm{th}}^{\mathrm{reg}}(\o)}
	{\Re \s_{\mathrm{el}}^{\mathrm{reg}}(\o)}
= \frac{\pi^2}{3 \b}
	\Biggl[ 1 + \biggl( \frac{\o\b}{2\pi} \biggr)^2 \Biggr],
\end{equation}
where $c$ is the central charge appearing in the Virasoro algebra and $\ka$ is the corresponding parameter in the $\mathfrak{u}(1)$-current algebra.
The first formula is essentially the Wiedemann-Franz law, while the second only gives that result in the limit $\o \to 0$.
Indeed, for $\o \neq 0$, there is a correction due to the factor $1 + ( {\o\b}/{2\pi} )^2$, which comes precisely from a Schwarzian derivative involving $v(x)$ and $\b(x)$, and \eqref{Wiedemann-Franz} can be viewed as generalizing the Wiedemann-Franz law within inhomogeneous CFT to finite times.

Likewise, our non-equilibrium recipe allows one to exactly compute alternative conductivities $\ka_{\mathrm{th}}(p, \omega; x')$ and $\s_{\mathrm{el}}(p, \o; x')$ as functions of momentum $p$ and $\omega$ describing the response to perturbations at position $x'$ (see Sect.~\ref{Sec:Applications:Conductivities:Alternative_conductivities}).
For $p = 0$, these satisfy the analogous relations to \eqref{Wiedemann-Franz}, while $p \neq 0$ enters the quantum anomaly, with the effect that the latter appears also in the ratio of the Drude weights and makes the ratio of the real regular parts more involved.


\subsection{Organization of the paper}


In Sect.~\ref{Sec:Prerequisites}, we review well-known facts about Minkowskian CFT that we will need and give examples of such theories.
In Sect.~\ref{Sec:Non-equilibrium_recipe}, we state our main proposition, which gives the recipe behind \eqref{recipe_introduction}.
This recipe is applied in Sect.~\ref{Sec:Applications} to derive the exact analytical results mentioned in \ref{Item:i}\textnormal{--}\ref{Item:iii} above, including \eqref{Wiedemann-Franz} for $\ka_{\mathrm{th}}(\omega)$ and $\s_{\mathrm{el}}(\o)$ as well as the corresponding results for $\ka_{\mathrm{th}}(p, \omega; x')$ and $\s_{\mathrm{el}}(p, \o; x')$.
Our tools are presented in Sect.~\ref{Sec:Main_tools} and used to prove the main proposition.
Concluding remarks are given in Sect.~\ref{Sec:Concluding_remarks}.

Certain topics are deferred to appendices. 
A review of linear-response theory is given in Appendix~\ref{App:Linear-response_theory}, including derivations of a dynamical formula and a Green-Kubo formula for the conductivities.
Appendix~\ref{App:Computational_details} contains computational details for the results in Sect.~\ref{Sec:Applications}.


\section{Prerequisites}
\label{Sec:Prerequisites}


As stressed in Sect.~\ref{Sec:Introduction}, we work in two-dimensional Minkowski space.
Specifically, we let spacetime be the cylinder $\mathbb{R} \times S^1$, where the spatial dimension is the circle $S^1$ of length $L$ parametrized by the coordinate $x \in [-L/2,L/2]$ and time is parameterized by $t \in \mathbb{R}$.
For later reference, we recall that the conformal group in this case is isomorphic to $\Diff_+(S^1) \times \Diff_+(S^1)$, where $\Diff_+(S^1)$ is the group of orientation-preserving diffeomorphisms of the circle \cite{Schott}.
We also introduce its universal covering group $\wDiff_{+}(S^1)$, which consists of all diffeomorphisms $\mathbb{R} \ni x \mapsto f(x) \in \mathbb{R}$ such that
$f(x + L) = f(x) + L$ and $f'(x) > 0$.
In addition, for later reference in regard to gauge transformations, let $\Map(S^1, \mathbb{R})$ denote the group of real-valued smooth maps on the circle, which can be thought of as all smooth functions $\mathbb{R} \ni x \mapsto h(x) \in \mathbb{R}$ such that $h(x + L) = h(x)$.


\subsection{Conformal transformations}
\label{Sec:Prerequisites:Conformal_transformations}


For all our intents and purposes, by a CFT we mean a unitary 1+1-dimensional quantum field theory that is invariant under conformal transformations.
The main objects of such a theory are the $L$-periodic operators $T_{\pm}(x)$ in \eqref{H_iCFT_introduction}.
These are the right- and left-moving components of the energy-momentum tensor in the usual light-cone coordinates $x^{\pm} = x \pm vt$ and satisfy the equal-time commutation relations
\begin{subequations}
\label{Virasoro_alg_pos_space}
\begin{align}
\bigl[ T_{\pm}(x), T_{\pm}(x') \bigr]
& = \mp 2\ii \d'(x-x') T_{\pm}(x')
		\pm \ii \d(x-x') T_{\pm}'(x')
		\pm \frac{c}{24\pi} \ii \d'''(x-x'), \\
\bigl[ T_{\pm}(x), T_{\mp}(x') \bigr]
& = 0,
\end{align}
\end{subequations}
where $c$ is the central charge and $\d(x)$ is the $L$-periodic delta function.
We recall that $T_{+} = T_{--}$, $T_{-} = T_{++}$, and $T_{+-} = 0 = T_{-+}$ in more conventional notation, where pairs of signs refer to the light-cone coordinates, and that $T_{\pm} = T_{\pm}(x^{\mp})$ only depends on one of these coordinates.

In Fourier space, the commutation relations in \eqref{Virasoro_alg_pos_space} correspond to those of two commuting copies of the Virasoro algebra (see Sect.~\ref{Sec:Main_tools}).
For completeness, we recall that the Hilbert space $\cH$ of our theory is a (possibly infinite) direct sum of unitary highest-weight representations of two commuting copies of the Virasoro algebra such that $\cH = \cH_+ \otimes \cH_-$ with $\cH_{+(-)}$ corresponding to right- (left-) moving excitations.

Another important class of operators are Virasoro primary fields.
Recall that a field $\Phi$ is said to be Virasoro primary with conformal weights $(\D^{+}_{\Phi}, \D^{-}_{\Phi})$ if it obeys
\begin{equation}
\label{U_Phi_Ui_prerequisites}
\Phi(x^-, x^+)
\to U(f_+, f_-) \Phi(x^-, x^+) U(f_+, f_-)^{-1}
= f'_+(x^-)^{\D^+_{\Phi}} f'_-(x^+)^{\D^-_{\Phi}}
		\Phi(f_+(x^-), f_-(x^+))
\end{equation}
under conformal transformations given by $f_\pm \in \wDiff_+(S^1)$ unitarily implemented by $U(f_+, f_-) = U_+(f_+) U_-(f_-)$ with $U_\pm(f_\pm)$ acting non-trivially only on $\cH_{\pm}$.
Recall also that the $T_{\pm}$-operators are not Virasoro primary since
\begin{equation}
\label{U_Tpm_Ui_prerequisites}
T_{\pm}(x^{\mp})
\to U(f_+, f_-) T_{\pm}(x^{\mp}) U(f_+, f_-)^{-1}
= f'_\pm(x^{\mp})^2 T_{\pm}(f_\pm(x^{\mp}))
		- \frac{c}{24 \pi} \{ f_\pm(x^{\mp}), x^{\mp} \}
\end{equation}
under conformal transformations, where
\begin{equation}
\label{Schwarzian_derivative}
\{ f(x), x \}
= \frac{f'''(x)}{f'(x)} - \frac{3}{2} \left( \frac{f''(x)}{f'(x)} \right)^2
\end{equation}
is the Schwarzian derivative of $f(x)$, see, e.g., \cite{FMS}.
The latter is an anomaly coming from the Schwinger term (the third term) in \eqref{Virasoro_alg_pos_space}, and it is the reason why the $T_{\pm}$-operators fail to be Virasoro primary with conformal weights
$(\D^{+}_{T_{+}}, \D^{-}_{T_{+}}) = (2,0)$
and
$(\D^{+}_{T_{-}}, \D^{-}_{T_{-}}) = (0,2)$.

Lastly, we recall that $T_{\pm}(x^\mp)$ and $\Phi(x^-, x^+)$ are actually operator-valued distributions.


\subsection{Gauge transformations}
\label{Sec:Prerequisites:Gauge_transformations}


We suppose that our CFT has a conserved $\mathrm{U}(1)$ current and let $J_{\pm}(x)$ denote the right- and left-moving components of this current in light-cone coordinates.
Then, in addition to \eqref{Virasoro_alg_pos_space},
\begin{subequations}
\label{Virasoro_and_u1_current_alg_pos_space}
\begin{align}
\bigl[ J_{\pm}(x), J_{\pm}(x') \bigr]
& = \mp \frac{\ka}{2\pi} \ii \d'(x-x'),
& \bigl[ J_{\pm}(x), J_{\mp}(x') \bigr]
& = 0,
	\label{u1_current_alg_pos_space} \\
\bigl[ T_{\pm}(x), J_{\pm}(x') \bigr]
& = \mp \ii \d'(x-x') J_{\pm}(x') \pm \ii \d(x-x') J_{\pm}'(x'),
& \bigl[ T_{\pm}(x), J_{\mp}(x') \bigr] 
& = 0,
\end{align}
\end{subequations}
where $\ka$ plays a similar role as $c$ in \eqref{Virasoro_alg_pos_space}.
As for $T_{\pm}$, we recall that $J_{\pm} = J_{\pm}(x^{\mp})$ only depends on one of the light-cone coordinates, while, different from \eqref{U_Tpm_Ui_prerequisites},
\begin{equation}
\label{U_Jpm_Ui_prerequisites}
J_{\pm}(x)
\to U(f_+, f_-) J_{\pm}(x) U(f_+, f_-)^{-1}
= f'_\pm(x) J_{\pm}(f_\pm(x))
\end{equation}
under conformal transformations.
Clearly, the $J_{\pm}$-operators are Virasoro primary with conformal weights
$(\D^{+}_{J_+}, \D^{-}_{J_+}) = (1,0)$ and $(\D^{+}_{J_-}, \D^{-}_{J_-}) = (0,1)$.

In Fourier space, the commutation relations in \eqref{u1_current_alg_pos_space} correspond to those of two commuting copies of the $\mathfrak{u}(1)$-current algebra.
The above can be generalized to more complicated current algebras, see, e.g., \cite{KnZa}, but for simplicity we consider only the Abelian case.
(Note that the examples we have in mind are such CFTs, see Sect.~\ref{Sec:Prerequisites:Examples}.)

Gauge transformations can be divided into large and small.
We only consider small $\mathrm{U}(1)$ gauge transformations, which are of the form $\ee^{\ii h}$ with $h \in \Map(S^1, \mathbb{R})$.
Under such transformations,
\begin{subequations}
\label{V_Tpm_Jpm_Vi_prerequisites}
\begin{align}
T_{\pm}(x^{\mp})
& \to V(h_+, h_-) T_{\pm}(x^{\mp}) V(h_+, h_-)^{-1}
= T_{\pm}(x^{\mp}) + h'_\pm(x^{\mp}) J_{\pm}(x^{\mp}) + \frac{\ka}{4\pi}h'_\pm(x^{\mp})^2,
	\label{V_Tpm_Vi_prerequisites} \\
J_{\pm}(x^{\mp})
& \to V(h_+, h_-) J_{\pm}(x^{\mp}) V(h_+, h_-)^{-1}
= J_{\pm}(x^{\mp}) + \frac{\ka}{2\pi}h'_\pm(x^{\mp})
	\label{V_Jpm_Vi_prerequisites}
\end{align}
\end{subequations}
for $h_\pm \in \Map(S^1, \mathbb{R})$ unitarily implemented by $V(h_+, h_-) = V_+(h_+)V_-(h_-)$ with $V_\pm(h_\pm)$ acting non-trivially only on $\cH_{\pm}$.
Similar to \eqref{U_Phi_Ui_prerequisites}, a field $\Phi$ that obeys
\begin{equation}
\label{V_Phi_Vi_prerequisites}
\Phi(x^-, x^+)
\to V(h_+, h_-) \Phi(x^-, x^+) V(h_+, h_-)^{-1}
= \ee^{-\ii [h_+(x^-) \tau^+_{\Phi} - h_-(x^+) \tau^-_{\Phi}]} \Phi(x^-, x^+)
\end{equation}
is said to be $\mathrm{U}(1)$ primary with associated weights $(\tau^+_{\Phi}, \tau^-_{\Phi})$.

We say that a field $\Phi$ is \emph{primary} if it obeys both \eqref{U_Phi_Ui_prerequisites} and \eqref{V_Phi_Vi_prerequisites} under conformal and gauge transformations with weights $(\D^{+}_{\Phi}, \D^{-}_{\Phi})$ and $(\tau^{+}_{\Phi}, \tau^{-}_{\Phi})$.
It follows from \eqref{V_Tpm_Jpm_Vi_prerequisites} that both the $T_{\pm}$- and the $J_{\pm}$-operators fail to be $\mathrm{U}(1)$ primary.

Lastly, we recall that also $J_{\pm}(x^\mp)$ are actually operator-valued distributions.


\subsection{Examples}
\label{Sec:Prerequisites:Examples}


Below we present three examples of CFTs and recall what the $T_{\pm}$- and $J_{\pm}$-operators and the primary fields are in each case.

\begin{example}[Free massless fermions on the circle]
\label{Example:Free_fermions}
Let $\psi^-_{r}(x)$ and $\psi^+_{r}(x) = \psi^-_{r}(x)^\dagger$ for $r = \pm$ be fermionic fields satisfying the usual anti-commutation relations
\begin{equation}
\label{psi_psi_anticomm_rels}
\bigl\{ \psi^-_{r\ppr}(x), \psi^+_{r'}(x') \bigr\}
= \d_{r, r'} \d(x-x'),
\quad
\bigl\{ \psi^\pm_{r\ppr}(x), \psi^\pm_{r'}(x') \bigr\}
= 0
\end{equation}
and anti-periodic boundary conditions $\psi^{\pm}_{r}(x + L) = -\psi^{\pm}_{r}(x)$.
The index $r = +(-)$ denotes right- (left-) moving fermions.
This defines a CFT with $c = 1$ given by the Hamiltonian
\begin{equation}
\label{H_free_fermions}
H_{F}
= \sum_{r=\pm} \int_{-L/2}^{L/2} \dd x\, \frac{1}{2}
	\bigl[
		\! \fermionWick{ \psi^+_{r}(x) \left(-\ii r \vF \partial_x \right) \psi^-_{r}(x) } \!
		+ \hc
	\bigr]
	- \frac{\pi\vF}{6L},
\end{equation}
where $\vF > 0$ is the Fermi velocity and $\fermionWick{\cdots}$ indicates (fermion) Wick ordering with respect to the vacuum (i.e., the filled Dirac sea).
Here,
$T_{\pm}(x)
= [
		\! \fermionWick{ \psi^+_{\pm}(x)(\mp\ii\partial_x) \psi^-_{\pm}(x) } \! + \hc
	]/2
	- {\pi}/{12L^2}.$
In addition, there is a conserved $\mathrm{U}(1)$ current with $\ka = 1$ and $J_{\pm}(x) = \! \fermionWick{ \psi^+_{\pm}(x) \psi^-_{\pm}(x) }$.
The primary fields consist of the fermionic fields, for which
$\D^{\pm}_{\psi^+_r} = \D^{\pm}_{\psi^-_r} = \d_{r,\pm}/2$ and
$\tau^{\pm}_{\psi^+_r} = -\tau^{\pm}_{\psi^-_r} = \d_{r,\pm}$.
\end{example}
\begin{example}[Free massless bosons on the circle]
\label{Example:Free_bosons}
Let $\rho_{r}(x) = \rho_{r}(x)^\dagger$ for $r = \pm$ be right- and left-moving bosonic fields satisfying the commutation relations
\begin{equation}
\label{rho_rho_comm_rels}
\bigl[ \rho_{r}(x), \rho_{r'}(x') \bigr]
= \frac{r}{2\pi\ii} \d_{r, r'} \d'(x-x')
\end{equation}
and periodic boundary conditions $\rho_{r}(x + L) = \rho_{r}(x)$.
This also defines a CFT with $c = 1$ given by the Hamiltonian
\begin{equation}
\label{H_free_bosons}
H_{B}
= \sum_{r=\pm} \int_{-L/2}^{L/2} \dd x\, \pi \vB \! \fermionWick{ \rho_{\pm}(x)^2 }
	- \frac{\pi\vB}{6L},
\end{equation}
where $\vB > 0$ and (by abuse of notation) $\fermionWick{\cdots}$ indicates (boson) Wick ordering.
Here,
$T_{\pm}(x)
= \pi \! \fermionWick{ \rho_{\pm}(x)^2 } - {\pi}/{12L^2}$, and there is a conserved $\mathrm{U}(1)$ current with $\ka = 1$ and $J_{\pm}(x) = \rho_{\pm}(x)$.
The primary fields consist of vertex operators involving the latter, see, e.g., \cite{FMS}.
\end{example}
\begin{remark}
\label{Remark:Bosonization}
There is a well-known equivalence between the models in Examples~\ref{Example:Free_fermions} and~\ref{Example:Free_bosons}, commonly referred to as bosonization.
The densities
$\rho_{\pm}(x)
= \! \fermionWick{ \psi^+_{\pm}(x) \psi^-_{\pm}(x) } \!$
can be shown to satisfy the bosonic properties in \eqref{rho_rho_comm_rels}, and, setting $\vB = \vF$, one can establish an operator identity between the Hamiltonians in \eqref{H_free_fermions} and \eqref{H_free_bosons} known as Kronig's identity.
This can be used to express the fermionic model in Example~\ref{Example:Free_fermions} as the bosonic one in Example~\ref{Example:Free_bosons}.
Similarly, there is an operator identity relating the fermionic fields to vertex operators involving the bosons. 
For details and precise statements, see, e.g., \cite{LaMo1} and references therein.
\end{remark}
\begin{example}[Local Luttinger model]
\label{Example:Luttinger_model}
This is a CFT with $c = 1$ describing interacting massless fermions formally given by the Hamiltonian \cite{Voit, SCP}
\begin{equation}
\label{H_local_Luttinger_model}
H
= H_{F}
	+ \sum_{r, r' = \pm} \int_{-L/2}^{L/2} \dd x\,
		\Bigl[ \d_{r, -r'} \frac{g_{2} \pi\vF}{2} + \d_{r, r'} \frac{g_{4} \pi\vF}{2} \Bigr]
		\! \fermionWick{ \psi^+_{r\ppr}(x) \psi^-_{r\ppr}(x) } \!
		\! \fermionWick{ \psi^+_{r'}(x) \psi^-_{r'}(x) } \!
	- L \cE_0
\end{equation}
with $H_{F}$ in \eqref{H_free_fermions} for $\psi^\pm_{r}(x)$ satisfying \eqref{psi_psi_anticomm_rels} and dimensionless coupling constants $g_{2}$ and $g_{4}$ satisfying $|g_{2}| < 2 + g_{4}$.
In the above, $\cE_0$ is a (diverging) constant subtracting the ground-state energy density up to the contribution $-\pi(v-\vF)/6L^2$, where $v = \vF \sqrt{ (1 + g_{4}/2)^2 - (g_{2}/2)^2 }$.
The model also possesses a conserved $\mathrm{U}(1)$ current with $\ka = K = \sqrt{ ({2 + g_{4} - g_{2}})/({2 + g_{4} + g_{2}}) }$.
In the local limit, ultraviolet divergencies are generated, which require additive and multiplicative renormalizations of the Hamiltonian [the term $-L\cE_0$ in \eqref{H_local_Luttinger_model}] and the fermionic fields, respectively.
In bosonized form (see Remark~\ref{Remark:Bosonization}), the Hamiltonian is $H = \int_{-L/2}^{L/2} \dd x\, v \bigl[ T_{+}(x) + T_{-}(x) \bigr]$ with
$T_{\pm}(x)
= (\pi/K) \! \fermionWick{ J_{\pm}(x)^2 } - {\pi}/{12L^2}$
and
$J_{\pm}(x)
= (1 + K) \rho_{\pm}(x)/2 + (1 - K) \rho_{\mp}(x)/2$,
where $\fermionWick{\cdots}$ indicates Wick ordering with respect to the interacting ground state.
The renormalized fermionic fields are primary with $\D^{\pm}_{\psi^+_r} = \D^{\pm}_{\psi^-_r} = (1 \pm rK)^2/8K$ and
$\tau^{\pm}_{\psi^+_r} = -\tau^{\pm}_{\psi^-_r} = (1 \pm rK)/2$.
\end{example}

We recall that the CFT in Example~\ref{Example:Luttinger_model} gives an effective description of the quantum $XXZ$ spin chain (close to but not exactly at half filling) with the so-called Luttinger parameter $K$ corresponding to the anisotropy $\D$.%
\footnote{%
\label{Footnote:K_and_Delta}%
If $\kF$ denotes the Fermi momentum and $a$ denotes the lattice spacing, then $K = {1}/{\sqrt{ 1 + {4\D}\sin(a\kF)/{\pi} }}$ for $a\kF$ close but not exactly equal to $\pi/2$, since at half filling there is an umklapp interaction term (with coupling constant proportional to $\Delta$) in the effective description that spoils exact solvability.%
}
Similarly, one application of the CFT in Example~\ref{Example:Free_fermions} is as the effective description of the quantum $XX$ spin chain ($\D = 0$).

Other CFTs to which our considerations apply include so-called minimal models and $k$-level Wess-Zumino-Witten models, cf., e.g., \cite{FMS, KnZa}.


\section{Non-equilibrium recipe}
\label{Sec:Non-equilibrium_recipe}


We recall that by inhomogeneous CFT we mean a unitary 1+1-dimensional CFT with Hamiltonian
\begin{equation}
\label{H_iCFT}
H
= \int_{-L/2}^{L/2} \dd x\, \cE(x),
\quad
\cE(x)
= v(x) \bigl[ T_{+}(x) + T_{-}(x) \bigr],
\end{equation}
where $v(x) = v(x + L) > 0$ is a smooth function and $T_{\pm}(x)$ satisfy \eqref{Virasoro_alg_pos_space}.
As usual, $H$ is the charge associated with energy conservation and $\cE(x)$ denotes the energy density.
Supposing (as we do) that our CFT has a conserved $\mathrm{U}(1)$ current, the associated total conserved charge is
\begin{equation}
\label{Q_iCFT}
Q
= \int_{-L/2}^{L/2} \dd x\, \rho(x),
\quad
\rho(x)
= J_{+}(x) + J_{-}(x),
\end{equation}
where $J_{\pm}(x)$ satisfy \eqref{Virasoro_and_u1_current_alg_pos_space} and $\rho(x)$ denotes the particle density.

As explained in Sect.~\ref{Sec:Introduction}, given smooth functions $\b(x) = \b(x + L) > 0$ and $\m(x) = \m(x + L)$, we are interested in expectations of the form
\begin{equation}
\label{G_iCFT}
\langle \cdots \rangle_{\mathrm{neq}}
= \frac{
		\Tr \bigl[
			\ee^{-G} (\cdots)
		\bigr]
	}{
		\Tr \bigl[ \ee^{-G} \bigr]
	},
\quad
G
= \int_{-L/2}^{L/2} \dd x\, \b(x)
	\bigl[ \cE(x) - \m(x) \rho(x) \bigr]
\end{equation}
for operators evolving under the dynamics given by $H$ in \eqref{H_iCFT}.
For later reference, define
\begin{subequations}
\label{fgh}
\begin{align}
f(x)
& = \int_{0}^{x} \dd x'\, \frac{v_{0}}{v(x')},
& \frac{1}{v_{0}}
& = \frac{1}{L} \int_{-L/2}^{L/2} \dd x'\, \frac{1}{v(x')},
		\label{f_v_x} \\
g(x)
& = \int_{0}^{x} \dd x'\, \frac{v_{0}\b_{0}}{v(x')\b(x')},
& \frac{1}{v_{0}\b_{0}}
& = \frac{1}{L} \int_{-L/2}^{L/2} \dd x'\, \frac{1}{v(x')\b(x')},
		\label{g_vbeta_x} \\
h(x)
& = \int_{0}^{x} \dd x'\, \frac{\m(x')\b(x') - \m_{0}\b_{0}}{v(x')\b(x')},
& \frac{\m_{0}}{v_{0}}
& = \frac{1}{L} \int_{-L/2}^{L/2} \dd x'\, \frac{\m(x')}{v(x')}.
		\label{h_vmu_x}
\end{align}
\end{subequations}
In Sect.~\ref{Sec:Main_tools:Proof_of_result}, we prove the following recipe for computing all such non-equilibrium expectations:

\begin{proposition}
\label{Proposition:Main}
Define $H$, $Q$, and $G$ as in \eqref{H_iCFT}\textnormal{--}\eqref{G_iCFT}
as well as $f(x)$, $g(x)$, $h(x)$, $v_{0}$, $\b_{0}$, and $\m_{0}$ as in \eqref{fgh}.
Let\,%
\footnote{%
This is to avoid any confusion of notation, reserving $\cO(\cdot, \cdot)$ to the dependence on the right- and left-moving coordinates, cf.\ the primary fields in Sects.~\ref{Sec:Prerequisites:Conformal_transformations} and~\ref{Sec:Prerequisites:Gauge_transformations}.%
}
\begin{equation}
\label{H_dynamics}
\cO(x; t) = \ee^{\ii H t} \cO(x) \ee^{-\ii H t}
\end{equation}
denote the inhomogeneous time evolution for any local operator $\cO(x)$ and let
\begin{equation}
\label{H_sCFT}
\langle \cdots \rangle_{0}
= \frac{
		\Tr \bigl[
			\ee^{- \b_{0} (H_{0} - \m_{0} Q_{0})} (\cdots)
		\bigr]
	}{
		\Tr \bigl[ \ee^{-\b_{0} (H_{0} - \m_{0} Q_{0})} \bigr]
	},
\quad
H_{0}
= \int_{-L/2}^{L/2} \dd x\, v_{0} \bigl[ T_{+}(x) + T_{-}(x) \bigr],
\quad
Q_{0} = Q
\end{equation}
denote the translation-invariant expectation corresponding to $\langle \cdots \rangle_{\mathrm{neq}}$ in \eqref{G_iCFT}.
Moreover, define
\begin{equation}
\label{x_pm_tilde}
\tilde{x}^{\pm}
= f^{-1}(f(x) \pm v_{0}t)
\end{equation}
and introduce the following:
\begin{itemize}

\item
For the components of the energy-momentum tensor,
\begin{align}
\tilde{T}_{\pm}(x; t)
& = 	\biggl( \frac{v_{0}\b_{0}}{v(x)\b(\tilde{x}^{\mp})} \biggr)^2
			T_{\pm}(g(\tilde{x}^{\mp}))
		+ \frac{v_{0}\b_{0}
					[\m(\tilde{x}^{\mp})\b(\tilde{x}^{\mp}) - \m_{0}\b_{0}]}
				{[v(x)\b(\tilde{x}^{\mp})]^2}
			J_{\pm}(g(\tilde{x}^{\mp}))
			\nonumber \\
& \quad
		+ \frac{\ka}{4\pi}
			\biggl(
				\frac{\m(\tilde{x}^{\mp})\b(\tilde{x}^{\mp})
				- \m_{0}\b_{0}}{v(x)\b(\tilde{x}^{\mp})}
			\biggr)^2
		- \frac{\cT(\tilde{x}^{\mp}) + \cS(x)}{2 v(x)^2}
		\label{Tpm_tilde}
\end{align}
with
\begin{subequations}
\label{cS_cT}
\begin{align}
\cS(x)
& = - \frac{c v(x)^2}{12\pi}
			\Biggl[
				\frac{v''(x)}{v(x)}
				- \frac{1}{2} \biggl( \frac{v'(x)}{v(x)} \biggr)^2
			\Biggr],
			\label{cS} \\
\cT(x)
& = - \frac{c v(x)^2}{12\pi}
			\Biggl[
				\frac{\b''(x)}{\b(x)}
				- \frac{1}{2} \biggl( \frac{\b'(x)}{\b(x)} \biggr)^2
	 			+ \frac{v'(x)}{v(x)} \frac{\b'(x)}{\b(x)}
			\Biggr].
			\label{cT}
\end{align}
\end{subequations}

\item
For the components of the $\mathrm{U}(1)$ current,
\begin{equation}
\label{Jpm_tilde}
\tilde{J}_{\pm}(x; t)
= \frac{v_{0}\b_{0}}{v(x)\b(\tilde{x}^{\mp})} J_{\pm}(g(\tilde{x}^{\mp}))
	+ \frac{\ka}{2\pi}
		\frac{\m(\tilde{x}^{\mp})\b(\tilde{x}^{\mp}) - \m_{0}\b_{0}}{v(x)\b(\tilde{x}^{\mp})}.
\end{equation}

\item
For any primary field $\Phi$ \textnormal{[}satisfying \eqref{U_Phi_Ui_prerequisites} and \eqref{V_Phi_Vi_prerequisites}\textnormal{]} with weights
$(\D^+_{\Phi}, \D^-_{\Phi})$ and $(\tau^+_{\Phi}, \tau^-_{\Phi})$,
\begin{equation}
\label{Phi_tilde}
\tilde{\Phi}(x; t)
= \ee^{-\ii [h(\tilde{x}^-) \tau^+_{\Phi} - h(\tilde{x}^+) \tau^-_{\Phi}]}
	\biggl( \frac{v_{0}\b_{0}}{v(x)\b(\tilde{x}^-)} \biggr)^{\D^+_{\Phi}}
	\biggl( \frac{v_{0}\b_{0}}{v(x)\b(\tilde{x}^+)} \biggr)^{\D^-_{\Phi}}
	\Phi(g(\tilde{x}^-), g(\tilde{x}^+)).
\end{equation}

\end{itemize}
Then
\begin{equation}
\label{recipe}
\langle \cO_{1}(t_{1}) \ldots \cO_{n}(t_{n}) \rangle_{\mathrm{neq}}
= \langle \tilde{\cO}_{1}(t_{1}) \ldots \tilde{\cO}_{n}(t_{n}) \rangle_{0}
\end{equation}
with $O_{j}(t_{j}) = \ee^{\ii H t_{j}} \cO_{j} \ee^{-\ii H t_{j}}$ and $\tilde{O}_{j}(t_{j})$ given by \eqref{Tpm_tilde}\textnormal{--}\eqref{Phi_tilde} for all $\cO_{j}$ in the algebra of operator-valued distributions generated by the components $T_{\pm}$ and $J_{\pm}$ together with all primary fields with products restricted to non-coincident points in space.
\end{proposition}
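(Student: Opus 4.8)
\emph{Strategy.}
The plan is to write both the operator $\ee^{-G}$ that encodes the initial state and the inhomogeneous evolution operators $\ee^{\pm\ii Ht}$ as unitary conjugates of their translation‑invariant analogues, using the projective unitary representations of $\Diff_+(S^1)$ and $\Map(S^1,\textnormal{U}(1))$ constructed in Sect.~\ref{Sect:Main_tools}, and then to read off $\tilde{\cO}_j(t_j)$ directly from the transformation laws \eqref{U_Phi_Ui_prerequisites}, \eqref{U_Tpm_Ui_prerequisites}, \eqref{U_Jpm_Ui_prerequisites}, \eqref{V_Tpm_Jpm_Vi_prerequisites} and \eqref{V_Phi_Vi_prerequisites}. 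Let $U(f)$, $U(g)$ implement the circle diffeomorphisms $f,g$ of \eqref{f_v_x}--\eqref{g_vbeta_x} — these are genuine elements of $\Diff_+(S^1)$, i.e.\ $f(x+L)=f(x)+L$ and $g(x+L)=g(x)+L$, precisely because $v_0$ and $v_0\b_0$ there are the harmonic means of $v$ and $v\b$ — and let $V(h)$ implement $\ee^{\ii h}\in\Map(S^1,\textnormal{U}(1))$, which is admissible since $h$ in \eqref{h_vmu_x} is $L$‑periodic: $\int_x^{x+L}\dd x'\,h'(x')=L\m_0/v_0-L\m_0/v_0=0$. Set $\cW=U(g)V(h)$.

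\emph{Step 1: two operator identities.}
Conjugating $H_0$ in \eqref{H_sCFT} by $U(f)^{-1}$ using \eqref{U_Tpm_Ui_prerequisites}, substituting $x\mapsto f(x)$ in the resulting integral, and invoking the composition (cocycle) rule for the Schwarzian derivative together with $v_0/f'(x)=v(x)$, one finds that $U(f)^{-1}H_0U(f)$ equals $H$ plus the c‑number $\int\dd x\,\cS(x)/v(x)$; here one uses that \eqref{cS} is exactly the statement $\{f(x),x\}=12\pi\cS(x)/\bigl(c\,v(x)^2\bigr)$. Thus
\[
H \;=\; U(f)^{-1}\bigl(H_0+c_H\bigr)U(f),\qquad c_H=-\int_{-L/2}^{L/2}\dd x\,\frac{\cS(x)}{v(x)}.
\]
Analogously, conjugating $G$ in \eqref{G_iCFT} by $V(h)$ via \eqref{V_Tpm_Jpm_Vi_prerequisites} turns its integrand into $\b(x)v(x)[T_+(x)+T_-(x)]-\m_0\b_0[J_+(x)+J_-(x)]$ up to a c‑number density — because $v(x)h'(x)-\m(x)=-\m_0\b_0/\b(x)$ by \eqref{h_vmu_x} — and then conjugating by $U(g)$ flattens $v(x)\b(x)\mapsto v_0\b_0$ exactly as above with $v\mapsto v\b$, while $U(g)$ fixes $Q=Q_0$. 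Recording the Schwarzian produced by $U(g)$ through $\{g(x),x\}=12\pi[\cS(x)+\cT(x)]/\bigl(c\,v(x)^2\bigr)$ — which follows from \eqref{cS}--\eqref{cT} upon setting $w=v\b$ and using $\{g(x),x\}=-[w''/w-\tfrac12(w'/w)^2]$ — this gives
\[
G \;=\; \cW^{-1}\bigl(\b_0(H_0-\m_0 Q_0)+c_G\bigr)\cW
\]
with $c_G$ a c‑number.

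\emph{Steps 2--3: reduction and computation of $\tilde{\cO}_j$.}
By Step 1 and cyclicity of the trace, $\ee^{-G}=\cW^{-1}\ee^{-\b_0(H_0-\m_0Q_0)}\cW\,\ee^{-c_G}$, so $c_G$ cancels in the normalised ratio and $\langle\cO_1(t_1)\cdots\cO_n(t_n)\rangle_{\textnormal{neq}}=\langle(\cW\cO_1(t_1)\cW^{-1})\cdots(\cW\cO_n(t_n)\cW^{-1})\rangle_0$; the restriction to spatially non‑coincident points makes each $\cO_j(t_j)$ a bona fide operator, so conjugation distributes over the product and the pointwise transformation laws may be integrated without producing distributions at coincident arguments. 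It then remains to evaluate $\tilde{\cO}_j(t_j)=\cW\,\ee^{\ii Ht_j}\cO_j\,\ee^{-\ii Ht_j}\,\cW^{-1}$. From Step 1, $\ee^{\ii Ht}=U(f)^{-1}\ee^{\ii H_0t}U(f)\ee^{\ii c_Ht}$ (the $\ee^{\pm\ii c_Ht}$ cancel), so $\ee^{\ii Ht}\cO\,\ee^{-\ii Ht}=U(f)^{-1}\bigl[\ee^{\ii H_0t}\bigl(U(f)\cO U(f)^{-1}\bigr)\ee^{-\ii H_0t}\bigr]U(f)$; for $\cO\in\{T_\pm(x),J_\pm(x),\Phi(x)\}$ one applies \eqref{U_Tpm_Ui_prerequisites}/\eqref{U_Jpm_Ui_prerequisites}/\eqref{U_Phi_Ui_prerequisites}, then the standard‑CFT time evolution (which acts as the rigid light‑cone translations $x^\mp\mapsto x^\mp\mp v_0t$ and carries no Schwarzian since $\{x+a,x\}=0$), then those laws once more for $f^{-1}$. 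The point $f^{-1}(f(x)\mp v_0t)=\tilde{x}^{\mp}$ of \eqref{x_pm_tilde} then appears, the Jacobian chain yields $f'(x)/f'(\tilde{x}^{\mp})=v(\tilde{x}^{\mp})/v(x)$, and the Schwarzian collapses (via the cocycle rule) to $[\cS(\tilde{x}^{\mp})-\cS(x)]/\bigl(2v(x)^2\bigr)$ in the $T_\pm$ case. Finally applying $\cW=U(g)V(h)$, the factor $V(h)$ inserts the current‑shift terms and the phase $\ee^{-\ii[h(\tilde{x}^-)\tau^+_\Phi-h(\tilde{x}^+)\tau^-_\Phi]}$ via \eqref{V_Tpm_Jpm_Vi_prerequisites} and \eqref{V_Phi_Vi_prerequisites}, and $U(g)$ produces the arguments $g(\tilde{x}^{\mp})$, the additional Jacobian $g'(\tilde{x}^{\mp})$ (so that $v(\tilde{x}^{\mp})g'(\tilde{x}^{\mp})/v(x)=v_0\b_0/\bigl(v(x)\b(\tilde{x}^{\mp})\bigr)$), and its own Schwarzian proportional to $\{g(\tilde{x}^{\mp}),\tilde{x}^{\mp}\}$; using $\{g(y),y\}=12\pi[\cS(y)+\cT(y)]/\bigl(c\,v(y)^2\bigr)$ the three Schwarzian terms combine into $-[\cT(\tilde{x}^{\mp})+\cS(x)]/\bigl(2v(x)^2\bigr)$, and one reads off precisely \eqref{Tpm_tilde}, \eqref{Jpm_tilde}, \eqref{Phi_tilde}.

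\emph{Main obstacle.}
The conceptual content lies in Step 1 and the first half of Step 3 — realising $H$, $G$, and hence $\ee^{\pm\ii Ht}$, as unitary conjugates of homogeneous operators — after which the rest is the composition of two diffeomorphisms and a $\textnormal{U}(1)$‑gauge map. The delicate part of that bookkeeping is keeping track of the three separate Schwarzian (quantum‑anomaly) contributions — one from conjugating $H$ by $U(f)$, one that is present in general but absent here because the homogeneous evolution is a translation, and one from the $U(g)$ in $\cW$ — and their signs, and verifying that they assemble into the single term $-[\cT(\tilde{x}^{\mp})+\cS(x)]/\bigl(2v(x)^2\bigr)$; this rests entirely on the Schwarzian cocycle rule and on the identifications $\{f(x),x\}=12\pi\cS(x)/\bigl(c\,v(x)^2\bigr)$ and $\{g(x),x\}=12\pi[\cS(x)+\cT(x)]/\bigl(c\,v(x)^2\bigr)$. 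A secondary, more technical matter is to justify that the pointwise transformation laws of Sect.~\ref{Sect:Prerequisites} integrate to genuine operator identities and that $U(f)$, $U(g)$, $V(h)$ are well defined on the Hilbert space; this is precisely the content of Sect.~\ref{Sect:Main_tools}, which I would invoke rather than reprove.
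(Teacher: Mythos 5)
Your proposal is correct and follows essentially the same route as the paper: establishing that $f$, $g$, $\ee^{\ii h}$ are admissible (the paper's Lemma~\ref{Lemma:fgh}), realising $H$ and $\ee^{-G}$ as conjugates of $H_{0}+\textnormal{const}$ and $\ee^{-\b_{0}(H_{0}-\m_{0}Q_{0})+\textnormal{const}}$ by $U(f)$ and $U(g)V(h)$ (Lemmas~\ref{Lemma:H_iCFT} and~\ref{Lemma:G_iCFT}), and then reading off $\tilde{\cO}_{j}$ by conjugating the $U(f)$-reduced time evolution with $U(g)V(h)$, with the Schwarzian bookkeeping assembling into $-[\cT(\tilde{x}^{\mp})+\cS(x)]/2v(x)^2$ exactly as in the paper. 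Your explicit verifications (harmonic-mean choices of $v_{0}$, $v_{0}\b_{0}$, $\m_{0}$; $\{f(x),x\}$ and $\{g(x),x\}$ in terms of $\cS$, $\cS+\cT$; the identity $v h'-\m=-\m_{0}\b_{0}/\b$) merely spell out details the paper leaves as ``analogous'' or ``one can show.''
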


For clarity, one can write \eqref{x_pm_tilde} as $\tilde{x}^{\pm} = \tilde{x}_{t}^{\pm}(x)$ with $\tilde{x}_{t}^{\pm}(x) = f^{-1}(f(x) \pm v_{0}t)$.
Proposition~\ref{Proposition:Main} makes manifest that the time evolution is entirely encoded in $\tilde{x}_{t}^{\pm}(x)$.
Moreover, it is straightforward to show that the latter satisfy the group property
$\tilde{x}_{t_{1}+t_{2}}^{\pm}(x)
= \tilde{x}_{t_{1}}^{\pm}(\tilde{x}_{t_{2}}^{\pm}(x))
= \tilde{x}_{t_{2}}^{\pm}(\tilde{x}_{t_{1}}^{\pm}(x))$
for $t_{1}, t_{2} \in \mathbb{R}$
and the equations of motion
\begin{equation}
\label{dxpm_dx}
\partial_{t} \tilde{x}_{t}^{\pm}(x)
= \pm v(x) \partial_{x} \tilde{x}_{t}^{\pm}(x),
\qquad
\tilde{x}_{0}^{\pm}(x) = x.
\end{equation}
This and the above justify defining $\tilde{x}^{\pm} = \tilde{x}_{t}^{\pm}(x)$ as coordinates generalizing the usual light-cone coordinates $x^{\pm}$ to the inhomogeneous dynamics given by $H$ in \eqref{H_iCFT}.

\begin{remark}
By extending the algebra of operator-valued distributions in Proposition~\ref{Proposition:Main} to coincident points and properly normal ordering the products, the statement can in principle be generalized to include also all descendent fields corresponding to each primary field.
\end{remark}
\begin{remark}
\label{Remark:gSSD_CFT}
As an example, consider the ``regularized'' SSD CFT given by $v(x)$ in \eqref{v_x_gSSD_CFT}.
Then
\begin{equation}
v_{0} = v \sqrt{1 - g^2},
\quad
f(x)
= \frac{L}{\pi}
	\arctan \biggl[
		\sqrt{\frac{1-g}{1+g}}
		\tan \! \left( \frac{\pi x}{L} \right)
	\biggr]
\end{equation}
and
\begin{equation}
\tilde{x}^{\pm}
= \frac{L}{\pi}
	\arctan
	\biggl[
		\sqrt{\frac{1+g}{1-g}}
		\tan
		\biggl(
			\arctan
			\biggl[
				\sqrt{\frac{1-g}{1+g}}
				\tan \! \left( \frac{\pi x}{L} \right)
			\biggr]
			\pm \sqrt{1 - g^2} \frac{\pi vt}{L}
		\biggr)
	\biggr].
\end{equation}
Since $v_{0} \to 0$ as $g \to 1^{-}$, $f(x)$ is not well defined in this limit, thus not an element of $\wDiff_{+}(S^1)$, cf.\ Sect.~\ref{Sec:Main_tools}.
However, the limit can be taken for $\tilde{x}^{\pm}$,
\begin{equation}
\lim_{g \to 1^{-}} \tilde{x}^{\pm}
= \frac{L}{\pi}
	\arctan
	\biggl[
		\tan \! \left( \frac{\pi x}{L} \right)
		\pm \frac{2\pi vt}{L}
	\biggr],
\end{equation}
implying that the points $x = \pm L/2$ do not evolve for SSD CFT, consistent with $v(\pm L/2) = (1-g)v$.
Lastly, for $|x|$ finite, $\tilde{x}^{\pm} \to x \pm (1+g) vt$ as $L \to \io$, i.e., this example then only amounts to a velocity renormalization, consistent with $\lim_{L \to \io} v(x) = (1+g) v$ away from $x = \pm L/2$.
\end{remark}


\section{Applications}
\label{Sec:Applications}


In this section, we present a number of exact analytical results obtained using Proposition~\ref{Proposition:Main}.
For later reference, a superscript $^\io$ will be used to indicate expectations in the limit $L \to \io$, e.g.,
$\langle \cO(x; t) \rangle_{\mathrm{neq}}^{\io}
= \lim_{L\to\io} \langle \cO(x; t) \rangle_{\mathrm{neq}}$,
and a superscript $^{c}$ to denote the connected part, e.g.,
$\langle \cO_{1}(x_{1}; t_{1}) \cO_{2}(x_{2}; t_{2}) \rangle_{\mathrm{neq}}^{c}
= \langle \cO_{1}(x_{1}; t_{1})\cO_{2}(x_{2}; t_{2}) \rangle_{\mathrm{neq}}
	- \langle \cO_{1}(x_{1}; t_{1}) \rangle_{\mathrm{neq}}
		\langle \cO_{2}(x_{2}; t_{2}) \rangle_{\mathrm{neq}}$.

\begin{remark}
\label{Remark:v_0_b_0_m_0}
It is important to note that $v_{0}$, $\b_{0}$, and $\m_{0}$ in \eqref{fgh} are defined in the finite volume and will not directly appear in any infinite-volume results:
They can be shown to cancel in every such computation, leaving only the dependence on $v(x)$, $\beta(x)$, and $\mu(x)$ at or between spacetime points where local operators are evaluated, as intuitively expected and mentioned already in Sect.~\ref{Sec:Introduction}.
Alternatively, if needed, the finite-volume constants can be replaced by new $v > 0$, $\b > 0$, and $\m \in \mathbb{R}$ in the infinite volume that can differ from the former up to $O(L^{-1})$ contributions, as discussed in \cite{GLM, GaKo}.
Moreover, in what follows, the generalized light-cone coordinates $\tilde{x}^{\pm}$ in \eqref{x_pm_tilde} can be assumed given by an $f(x)$ with $v_{0}$ replaced by such a $v$ in the infinite volume: In fact, the value of $v$ does not matter for $\tilde{x}^{\pm} = \tilde{x}^{\pm}_{t}(x)$ in the limit $L \to \io$ since these coordinates can then simply be viewed as defined by \eqref{dxpm_dx} and thus only depend directly on $v(x)$.
\end{remark}


\subsection{Densities and currents}
\label{Sec:Applications:Densities_and_currents}


The energy density operator $\cE(x)$ is given in \eqref{H_iCFT}, and the corresponding heat current operator
\begin{equation}
\label{cJ_iCFT}
\cJ(x) = v(x)^2 \bigl[ T_{+}(x) - T_{-}(x) \bigr]
\end{equation}
can be identified from that the pair must satisfy a continuity equation.
(This determines both up to trivial c-number contributions.)
It follows using \eqref{dxpm_dx} [and \eqref{HtTpmHt}] that
\begin{subequations}
\label{iCFT_cEcJ_cont_eq}
\begin{gather}
\partial_t \cE(x) + \partial_x \cJ(x)
= 0,
	\label{iCFT_cEcJ_cont_eq_1} \\
\partial_t \cJ(x) + v(x) \partial_x \bigl[ v(x) \cE(x) + \cS(x) \bigr]
= 0
	\label{iCFT_cEcJ_cont_eq_2}
\end{gather}
\end{subequations}
with $\cS(x)$ in \eqref{cS}.
Here, \eqref{iCFT_cEcJ_cont_eq_1} is the usual continuity equation associated with energy conservation.
However, \eqref{iCFT_cEcJ_cont_eq_2} is not the same as in standard CFT, see Remark~\ref{Remark:New_momentum_and_axial_charge_conservation}.
In particular, the latter implies that the total current
$\int_{-L/2}^{L/2} \dd x\, \cJ(x; t)$
is not conserved if $v(x) \neq v$.

Again, supposing (as we do) that our CFT has a conserved $\mathrm{U}(1)$ current, the particle density operator $\rho(x)$ is given in \eqref{Q_iCFT}, and the corresponding charge current operator
\begin{equation}
\label{j_iCFT}
j(x) = v(x) \bigl[ J_{+}(x) - J_{-}(x) \bigr]
\end{equation}
can be identified by the same argument as above: Using \eqref{dxpm_dx} [and \eqref{HtJpmHt}], we obtain%
\begin{subequations}
\label{iCFT_rj_cont_eq}
\begin{gather}
\partial_t \rho(x) + \partial_x j(x)
= 0,
	\label{iCFT_rj_cont_eq_1} \\
\partial_t j(x) + v(x) \partial_x \bigl[ v(x)\rho(x) \bigr]
= 0.
	\label{iCFT_rj_cont_eq_2}
\end{gather}
\end{subequations}
Similar to above, \eqref{iCFT_rj_cont_eq_1} is the continuity equation associated with particle number conservation, while \eqref{iCFT_rj_cont_eq_2} implies that the total charge current is not conserved, see Remark~\ref{Remark:New_momentum_and_axial_charge_conservation}.

In Appendix~\ref{App:Computational_details:cEcJ_rj_iCFT_neq}, we derive the following results:
\begin{subequations}
\label{cEcJ_rj_iCFT_neq}
\begin{align}
\langle \cE(x; t) \rangle_{\mathrm{neq}}^{\io}
& = \frac{F(\tilde{x}^-) + F(\tilde{x}^+)}{2v(x)} - \frac{\cS(x)}{v(x)},
& \langle \cJ(x; t) \rangle_{\mathrm{neq}}^{\io}
& = \frac{F(\tilde{x}^-) - F(\tilde{x}^+)}{2},
		\label{cEcJ_iCFT_neq} \\
\langle \rho(x; t) \rangle_{\mathrm{neq}}^{\io}
& = \frac{G(\tilde{x}^-) + G(\tilde{x}^+)}{2v(x)},
& \langle j(x; t) \rangle_{\mathrm{neq}}^{\io}
& = \frac{G(\tilde{x}^-) - G(\tilde{x}^+)}{2},
		\label{rj_iCFT_neq}
\end{align}
\end{subequations}
where $\tilde{x}^{\pm} = \tilde{x}^{\pm}_{t}(x)$ are given by \eqref{dxpm_dx} (cf.\ Remark~\ref{Remark:v_0_b_0_m_0}) and
\begin{equation}
\label{F_x_G_x_iCFT_neq}
F(x)
= \frac{\pi c}{6 \b(x)^2} + \frac{\ka \m(x)^2}{2\pi} - \cT(x),
\quad
G(x)
= \frac{\ka \m(x)}{\pi}
\end{equation}
with $\cS(x)$ and $\cT(x)$ in \eqref{cS_cT}.
Note that these results generalize the ones in \cite{GLM} to inhomogeneous dynamics.%
\footnote{%
Our conventions for $F(x)$ and $G(x)$ differ from those in \cite{GLM} by a velocity factor.%
}

\begin{remark}[Interpretations of \eqref{iCFT_cEcJ_cont_eq_2} and \eqref{iCFT_rj_cont_eq_2}]
\label{Remark:New_momentum_and_axial_charge_conservation}
As usual, the Hamiltonian $H$ in \eqref{H_iCFT} is the generator of time translations and the (usual) total momentum operator
$P = \int_{-L/2}^{L/2} \dd x\, [T_{+}(x) - T_{-}(x)]$ is the generator of spatial ones.
These do not commute for non-constant $v(x)$.
Indeed, it is straightforward to show that
$\partial_{t} P
= \ii [H, P]
= - \int_{-L/2}^{L/2} \dd x\, [{v'(x)}/{v(x)}] \cE(x)$.
This implies that (usual) momentum is not conserved in inhomogeneous CFT.
However, one can define a new operator
$\int_{-L/2}^{L/2} \dd x\, [v(x)/v_{0}] [T_{+}(x) - T_{-}(x)]$
generating position-dependent spatial translations given by $x \to x' = x + \e v(x)/v_{0} + o(\e)$ for infinitesimally small $\e$.
This operator is conserved with the corresponding continuity equation given by \eqref{iCFT_cEcJ_cont_eq_2}.
Similarly, \eqref{iCFT_rj_cont_eq_2} corresponds to the continuity equation associated with the axial current with $\rho_{A}(x) = [J_{+}(x) - J_{-}(x)]/\ka$ and $j_{A}(x) = v(x) [J_{+}(x) + J_{-}(x)]/\ka$.
\end{remark}
%


\subsection{Correlation functions}


In Appendix~\ref{App:Computational_details:cJ_j_combos_iCFT_neq_and_psipsi_iLLM_neq}, we derive the following results for the connected current-current correlation functions in the thermodynamic limit:
\begin{subequations}
\label{cJ_j_combos_iCFT_neq}
\begin{align}
\langle \cJ(x_{1}; t_{1}) \cJ(x_{2}; t_{2}) \rangle_{\mathrm{neq}}^{c, \io}
& = \sum_{r=\pm}
		\Biggl[
			\frac{\pi^2 c}
				{8\b(\tilde{x}^{-r}_{1})^2\b(\tilde{x}^{-r}_{2})^2
					\sinh^4
					\bigl(
						\pi \int_{\tilde{x}^{-r}_{2}}^{\tilde{x}^{-r}_{1}}
						\dd x'\, v(x')^{-1} \b(x')^{-1}
					\bigr)} \nonumber \\
& \qquad\quad +
			\frac{-\ka\m(\tilde{x}^{-r}_{1})\m(\tilde{x}^{-r}_{2})}
				{4\b(\tilde{x}^{-r}_{1})\b(\tilde{x}^{-r}_{2})
					\sinh^2
					\bigl(
						\pi \int_{\tilde{x}^{-r}_{2}}^{\tilde{x}^{-r}_{1}}
						\dd x'\, v(x')^{-1} \b(x')^{-1}
					\bigr)}
		\Biggr],
		\label{cJcJ_iCFT_neq} \\
\langle j(x_{1}; t_{1}) j(x_{2}; t_{2}) \rangle_{\mathrm{neq}}^{c, \io}
& = \sum_{r=\pm}
		\frac{-\ka}
			{4\b(\tilde{x}^{-r}_{1})\b(\tilde{x}^{-r}_{2})
				\sinh^2
				\bigl(
					\pi \int_{\tilde{x}^{-r}_{2}}^{\tilde{x}^{-r}_{1}}
					\dd x'\, v(x')^{-1} \b(x')^{-1}
				\bigr)},
		\label{jj_iCFT_neq} \\
\langle \cJ(x_{1}; t_{1}) j(x_{2}; t_{2}) \rangle_{\mathrm{neq}}^{c, \io}
& = \sum_{r=\pm}
		\frac{-\ka\m(\tilde{x}^{-r}_{1})}
			{4\b(\tilde{x}^{-r}_{1})\b(\tilde{x}^{-r}_{2})
				\sinh^2
				\bigl(
					\pi \int_{\tilde{x}^{-r}_{2}}^{\tilde{x}^{-r}_{1}}
					\dd x'\, v(x')^{-1} \b(x')^{-1}
				\bigr)},
		\label{cJj_iCFT_neq} \\
\langle j(x_{1}; t_{1}) \cJ(x_{2}; t_{2}) \rangle_{\mathrm{neq}}^{c, \io}
& = \sum_{r=\pm}
		\frac{-\ka\m(\tilde{x}^{-r}_{2})}
			{4\b(\tilde{x}^{-r}_{1})\b(\tilde{x}^{-r}_{2})
				\sinh^2
				\bigl(
					\pi \int_{\tilde{x}^{-r}_{2}}^{\tilde{x}^{-r}_{1}}
					\dd x'\, v(x')^{-1} \b(x')^{-1}
				\bigr)},
		\label{jcJ_iCFT_neq}
\end{align}
\end{subequations}
where $\tilde{x}^{\pm}_{j} = \tilde{x}^{\pm}_{t_{j}}(x_{j})$ are given by \eqref{dxpm_dx} (cf.\ Remark~\ref{Remark:v_0_b_0_m_0}).
These infinite-volume results hold for any inhomogeneous CFT; the last three assuming there is a conserved $\mathrm{U}(1)$ current.
Note that all $n$-point current correlation functions can in principle be computed, if known in the homogeneous case.

Given primary fields $\Phi_{j}$ ($j = 1, \ldots, n$) [satisfying \eqref{U_Phi_Ui_prerequisites} and \eqref{V_Phi_Vi_prerequisites}] with weights
$(\D^{+}_{\Phi_{j}}, \D^{-}_{\Phi_{j}})$ and $(\tau^{+}_{\Phi_{j}}, \tau^{-}_{\Phi_{j}})$,
it follows from Proposition~\ref{Proposition:Main} that their correlation functions are given by
\begin{align}
\biggl\langle \prod_{j = 1}^{n} \Phi_{j}(x_{j}; t_{j}) \biggr\rangle_{\mathrm{neq}}
& = \exp \biggl(
			- \ii \sum_{j = 1}^{n}
				\bigl[
					h(\tilde{x}^-_{j}) \tau^+_{\Phi_{j}} - h(\tilde{x}^+_{j}) \tau^-_{\Phi_{j}}
				\bigr]
		\biggr) \nonumber \\
& \quad \times
		\biggl[
			\prod_{j = 1}^{n}
			\biggl( \frac{v_{0}\b_{0}}{v(x_{j})\b(\tilde{x}^-_{j})} \biggr)^{\D^+_{\Phi_{j}}}
			\biggl( \frac{v_{0}\b_{0}}{v(x_{j})\b(\tilde{x}^+_{j})} \biggr)^{\D^-_{\Phi_{j}}}
		\biggr]
		\biggl\langle
			\prod_{j = 1}^{n}
			\Phi_{j}(g(\tilde{x}^-_{j}), g(\tilde{x}^+_{j}))
		\biggr\rangle_{0}.
\end{align}

Important applications of correlation functions of primary fields include to study the propagation of excitations \cite{CaCa1} or compute the entanglement entropy \cite{CaCa2, CCaD}.
For instance, studying two-point correlation functions of primary fields, one finds that excitations propagate along curved light-cones given by $v(x)$ following a quantum quench \cite{DSC}, while for Floquet systems described by inhomogeneous CFT, excitations can be shown to accumulate at unstable fixed points of the corresponding trajectories if the system is heating, see, e.g., \cite{LCTTNC1, LapMoo}.

As examples of primary fields, we consider the fermionic fields $\psi_r^{\pm}$ in the inhomogeneous local Luttinger model, cf.\ Example~\ref{Example:Luttinger_model}.
Supposing that these are properly renormalized, we show in Appendix~\ref{App:Computational_details:cJ_j_combos_iCFT_neq_and_psipsi_iLLM_neq} that the explicit fermion two-point correlation functions are
\begin{align}
\langle \psi^+_{r}(x_{1}; t_{1}) \psi^-_{r'}(x_{2}; t_{2}) \rangle_{\mathrm{neq}}^{\io}
& = \d_{r, r'}
		\frac{1}{2\pi} \,
		\exp \biggl(
			  \ii \tau^+_{\psi^-_r} \int_{\tilde{x}^-_{2}}^{\tilde{x}^-_{1}} \dd x'\, \frac{\m(x')}{v(x')}
			- \ii \tau^-_{\psi^-_r} \int_{\tilde{x}^+_{2}}^{\tilde{x}^+_{1}} \dd x'\, \frac{\m(x')}{v(x')}
		\biggr)
		\nonumber \\
& \quad \times
		\left(
			\frac{
				\ii \pi
			}{
				\sqrt{v(x_{1})v(x_{2})\b(\tilde{x}^-_{1})\b(\tilde{x}^-_{2})}
				\sinh
				\bigl(
					\pi \int_{\tilde{x}^{-}_{2}}^{\tilde{x}^{-}_{1}}
					\dd x'\, v(x')^{-1} \b(x')^{-1}
				\bigr)
			}
		\right)^{2\D^+_{\psi^-_r}}
		\nonumber \\
& \quad \times
		\left( 
			\frac{
				-\ii \pi
			}{
				\sqrt{v(x_{1})v(x_{2})\b(\tilde{x}^+_{1})\b(\tilde{x}^+_{2})}
				\sinh
				\bigl(
					\pi \int_{\tilde{x}^{+}_{2}}^{\tilde{x}^{+}_{1}}
					\dd x'\, v(x')^{-1} \b(x')^{-1}
				\bigr)
			}
	\right)^{2\D^-_{\psi^-_r}},
	\label{psipsi_iLLM_neq}
\end{align}
where $\tilde{x}^{\pm}_{j} = \tilde{x}^{\pm}_{t_{j}}(x_{j})$ are given by \eqref{dxpm_dx} (cf.\ Remark~\ref{Remark:v_0_b_0_m_0})
and $(\D^{+}_{\psi^{\pm}_r}, \D^{-}_{\psi^{\pm}_r})$ and
$(\tau^{+}_{\psi^{\pm}_r}, \tau^{-}_{\psi^{\pm}_r})$ are given in Example~\ref{Example:Luttinger_model}.


\subsection{Conductivities}
\label{Sec:Applications:Conductivities}


To compute conductivities in a unified way, we rearrange the thermodynamic variables into $\m_{1} = \b\m$ and $\m_{2} = -\b$ and label the densities and currents as $\rho_{1} = \rho$, $\rho_{2} = \cE$, $j_{1} = j$, and $j_{2} = \cJ$, respectively.%
\footnote{%
\label{Footnote:Higher_conserved_charges}%
The order by which we label quantities for electrical and heat transport is different from that in \cite{GLM}.
Our list can also be generalized to higher conserved charges if such exist, i.e., to $Q_{n} = \int \dd x\, \rho_{n}$ with density $\rho_{n}$ and current $j_{n}$ satisfying $\partial_t \rho_{n} + \partial_x j_{n} = 0$ for $n \geq 3$.%
}
Let $\ka_{mn}(\o)$ denote conductivities as functions of frequency $\o$.
Following \cite{Kubo}, we define them as linear-response functions measuring the change in the total current $\int \dd x\, j_{m}$ due to a unit-pulse perturbation in $\m_{n}$, which we recall is the thermodynamic conjugate to $\rho_{n}$, see Appendix~\ref{App:Linear-response_theory}.
In our case, $\m_{n}(x) = \m_{n} + \d\m_{n} W(x)$ with the spatial dependence of the perturbations $\d\m_{n} W(x)$ given by a function $W(x)$ describing an overall kink-like profile such that $\lim_{x\to\mp\io} W(x) = \pm 1/2$ in the infinite volume.%
\footnote{%
See \cite{GLM} (and also Appendix~\ref{App:Linear-response_theory:Conductivity_matrix}) for a discussion of how this is compatible with our periodic boundary conditions in the finite volume.%
}
As in \eqref{ka_th_s_el}, we recall that, on general grounds,
\begin{equation}
\label{ka_mn_o}
\Re \ka_{mn}(\o)
= D_{mn} \pi \d(\o) + \Re \ka_{mn}^{\mathrm{reg}}(\o)
\end{equation}
with Drude weights $D_{mn}$ and real regular parts $\Re \ka_{mn}^{\mathrm{reg}}(\o)$.
These are important quantities characterizing the transport properties:
A non-zero $D_{mn}$ corresponds to a non-zero ballistic contribution, while a non-zero $\Re \ka_{mn}^{\mathrm{reg}}(\o)$ for $\omega = 0$ ($\neq 0$) corresponds to a non-zero normal (anomalous) diffusive contribution.

We give two different approaches by which $D_{mn}$ and $\Re \ka_{mn}^{\mathrm{reg}}(\o)$ in \eqref{ka_mn_o} can be computed.
The first is based on direct dynamical considerations and the second on a Green-Kubo formula, see Sects.~\ref{Sec:Applications:Conductivities:Dynamical_approach} and~\ref{Sec:Applications:Conductivities:Green-Kubo_approach}, respectively.
As mentioned in Sect.~\ref{Sec:Introduction}, on general grounds, both approaches must yield the same results, see Appendix~\ref{App:Linear-response_theory}.
For inhomogeneous CFT, they are
\begin{subequations}
\label{D_mn_ka_mn_o_reg_iCFT}
\begin{equation}
\left( \begin{matrix}
	D_{11} & D_{12} \\
	D_{21} & D_{22}
\end{matrix} \right)
=
\left( \begin{matrix}
	\frac{v\ka}{\pi \b}
	& \frac{v\ka \m}{\pi \b} \\
	\frac{v\ka \m}{\pi \b}
	& \frac{\pi v c}{3 \b^3} + \frac{v \ka \m^2}{\pi \b}
\end{matrix} \right)
	\label{D_mn_iCFT}
\end{equation}
and
\begin{equation}
\left( \begin{matrix}
	\Re \ka_{11}^{\mathrm{reg}}(\o) & \Re \ka_{12}^{\mathrm{reg}}(\o) \\
	\Re \ka_{21}^{\mathrm{reg}}(\o) & \Re \ka_{22}^{\mathrm{reg}}(\o)
\end{matrix} \right)
=
\left( \begin{matrix}
	\frac{\ka}{2\pi \b}
	& \frac{\ka \m}{2\pi \b} \\
	\frac{\ka \m}{2\pi \b}
	& \frac{\pi c}{6 \b^3} \left[ 1 + \left( \frac{\o\b}{2\pi} \right)^2 \right]
		+ \frac{\ka \m^2}{2\pi \b}
\end{matrix} \right) I(\o)
	\label{ka_mn_o_reg_iCFT}
\end{equation}
\end{subequations}
with
\begin{equation}
\label{I_omega}
I(\o)
= \int_{-\io}^{\io} \dd x\!  \int_{-\io}^{\io} \dd x'\,
	\biggl( 1 - \frac{v}{v(x)} \biggr)
	\partial_{x'} \bigl[ - W(x') \bigr]
	\cos \biggl( \o \int_{x'}^{x} \frac{\dd x''}{v(x'')} \biggr),
\end{equation}	
where $v > 0$ is given by the condition that it must subtract a constant contribution from $v(x)$ in the infinite volume so that $1 - v/v(x) \in L^1(\mathbb{R})$, see Appendix~\ref{Appendix:Computational_details:I_1_I_2_results}.%
\footnote{%
This is consistent with that $v$ and $v_0$ can differ up to $O(L^{-1})$ contributions (cf.\ Remark~\ref{Remark:v_0_b_0_m_0}).%
}
This condition fixes the values of the Drude weights in \eqref{D_mn_iCFT}.


\subsubsection{Dynamical approach}
\label{Sec:Applications:Conductivities:Dynamical_approach}

Consider our initial state in \eqref{G_iCFT} defined by kink-like profiles
$\m_{1}(x) = \b(x)\m(x)$ and $\m_{2}(x) = -\b(x)$ with heights $\d\m_{1}$ and $\d\m_{2}$, respectively: $\m_{n}(x) = \m_{n} + \d\m_{n} W(x)$ with $W(x)$ above.
Then
\begin{equation}
\label{ka_mn_o_dynamic}
\ka_{mn}(\o)
= \frac{\partial}{\partial(\d\m_{n})} \int_{0}^{\io} \dd t\, \ee^{\ii \o t}
	\int_{-\io}^{\io} \dd x\, 
	\partial_t \langle j_{m}(x; t)
	\rangle_{\mathrm{neq}}^{\io}\Big|_{\d\m_{1} = \d\m_{2} = 0}.
\end{equation}
A proof is given in Appendix~\ref{App:Linear-response_theory:Conductivity_matrix}.

In Appendix~\ref{App:Computational_details:ka_mn_o_dynamic_iCFT}, we derive \eqref{D_mn_ka_mn_o_reg_iCFT} using \eqref{ka_mn_o_dynamic}, for which the only ingredients are the results for the currents in \eqref{cEcJ_rj_iCFT_neq} in the infinite volume.


\subsubsection{Green-Kubo approach}
\label{Sec:Applications:Conductivities:Green-Kubo_approach}

The conductivities can equivalently be computed using the following Green-Kubo formula:
\begin{equation}
\label{ka_mn_o_GK}
\ka_{mn}(\o)
= \frac{1}{\b} \int_{0}^{\b} \dd \t \int_{0}^{\io} \dd t\, \ee^{\ii \o t}
	\int_{-\io}^{\io} \dd x
	\int_{-\io}^{\io} \dd x'\, \partial_{x'} [-W(x')]
	\langle j_{m}(x; t) j_{n}(x'; \ii\t) \rangle_{0}^{c, \io}
\end{equation}
with $\b_{0}$ and $\m_{0}$ replaced by $\b > 0$ and $\m \in \mathbb{R}$, respectively.
A proof is given in Appendix~\ref{App:Linear-response_theory:Conductivity_matrix}.

In Appendix~\ref{App:Computational_details:ka_mn_o_GK_iCFT}, we derive \eqref{D_mn_ka_mn_o_reg_iCFT} using \eqref{ka_mn_o_GK}.
Here, besides $W(x)$ above, the only ingredients are the equilibrium current-current correlation functions in the infinite volume.
For inhomogeneous CFT, the latter are obtained from \eqref{cJ_j_combos_iCFT_neq} by setting $\b(x) = \b$ and $\m(x) = \m$.


\subsubsection{Thermal and electrical conductivities}
\label{Sec:Applications:Conductivities:Thermal_and_electrical_conductivities}

The thermal conductivity $\ka_{\mathrm{th}}(\o)$ in \eqref{ka_th} and the electrical conductivity $\s_{\mathrm{el}}(\o)$ in \eqref{s_el} are computed as the responses to changes in temperature $\b^{-1}$ and chemical potential $\m$, respectively.
Thus, it follows from \eqref{D_mn_ka_mn_o_reg_iCFT} that%
\footnote{%
By a change of variables from $\m_{1}$ and $\m_{2}$ to $\b^{-1}$ and $\m$, noting that $\partial/\partial(\b^{-1}) = - \m\b^2 \partial/\partial \m_1 + \b^2 \partial/\partial \m_2$ and $\partial/\partial \m = \b \partial/\partial \m_1$.%
}
\begin{subequations}
\begin{align}
D_{\mathrm{th}}
& = \frac{\pi v c}{3\b},
& \Re \ka_{\mathrm{th}}^{\mathrm{reg}}(\o)
& = \frac{\pi c}{6\b} \biggl[ 1 + \left( \frac{\o\b}{2\pi} \right)^2 \biggr] I(\o),
		\label{D_th_ka_th_reg_iCFT} \\
D_{\mathrm{el}}
& = \frac{v \ka}{\pi},
& \Re \s_{\mathrm{el}}^{\mathrm{reg}}(\o)
& = \frac{\ka}{2\pi} I(\o)
		\label{D_el_s_el_reg_iCFT}
\end{align}
\end{subequations}
with $I(\omega)$ in \eqref{I_omega}.
Taking their ratios implies the relations in \eqref{Wiedemann-Franz}.


\subsubsection{Alternative conductivities}
\label{Sec:Applications:Conductivities:Alternative_conductivities}

In addition to the conductivities used above, one can also define an alternative set of conductivities $\ka_{mn}(x, t; x')$ measuring the change in the current $j_{m}(x)$ at time $t$ following a unit-pulse perturbation in the profile $\m_{n}(x')$, see Appendix~\ref{App:Linear-response_theory:Alternative_linear-response_functions}.
Fourier transforming in space and time, these can, for instance, be computed using the following Green-Kubo formula:%
\footnote{%
Note that the spatial Fourier transform involves $f(x) - f(x') = \int_{x'}^{x} \dd x''\,  v/v(x'')$, cf.\ \eqref{I_omega}.
This is natural since the system is inhomogeneous.%
}
\begin{equation}
\label{ka_mn_p_o_xp_GK}
\ka_{mn}(p, \o; x')
= \frac{1}{\b} \int_{0}^{\b} \dd \t \int_{0}^{\io} \dd t\, \ee^{\ii \o t}
	\int_{-\io}^{\io} \dd x\, \ee^{-\ii p[f(x) - f(x')]}
	\langle j_{m}(x; t) j_{n}(x'; \ii\t) \rangle_{0}^{c, \io}
\end{equation}
with $\b_{0}$ and $\m_{0}$ replaced by $\b > 0$ and $\m \in \mathbb{R}$, respectively.

In analogy with \eqref{ka_mn_o},
\begin{equation}
\label{ka_mn_p_o_xp}
\Re \ka_{mn}(p, \o; x')
= D_{mn}(p) \pi \frac{\d(\o - vp) + \d(\o + vp)}{2}
	+ \Re \ka_{mn}^{\mathrm{reg}}(p, \o; x').
\end{equation}
By similar computations as for $\Re \ka_{mn}(\o)$, starting from \eqref{ka_mn_p_o_xp_GK} and repeating the same steps but generalized in obvious ways, the results for inhomogeneous CFT can be shown to be
\begin{subequations}
\label{D_mn_p_ka_mn_p_o_xp_reg_iCFT}
\begin{equation}
\left( \begin{matrix}
	D_{11}(p) & D_{12}(p) \\
	D_{21}(p) & D_{22}(p)
\end{matrix} \right)
=
\left( \begin{matrix}
	\frac{v\ka}{\pi \b}
	& \frac{v\ka \m}{\pi \b} \\
	\frac{v\ka \m}{\pi \b}
	& \frac{\pi v c}{3 \b^3} \left[ 1 + \left( \frac{v\b p}{2\pi} \right)^2 \right]
		+ \frac{v \ka \m^2}{\pi \b}
\end{matrix} \right)
	\label{D_mn_p_iCFT}
\end{equation}
and
\begin{multline}
\left( \begin{matrix}
	\Re \ka_{11}^{\mathrm{reg}}(p, \o; x') & \Re \ka_{12}^{\mathrm{reg}}(p, \o; x') \\
	\Re \ka_{21}^{\mathrm{reg}}(p, \o; x') & \Re \ka_{22}^{\mathrm{reg}}(p, \o; x')
\end{matrix} \right) \\
=
\int_{-\io}^{\io} \dd x\,
\left( \begin{matrix}
	\frac{\ka}{2\pi \b}
	& \frac{\ka \m}{2\pi \b} \\
	\frac{\ka \m}{2\pi \b}
	& \frac{\pi c}{6 \b^3}
		\left[ 1 + \left( \frac{[\o - vp \sgn(x-x')]\b}{2\pi} \right)^2 \right]
		+ \frac{\ka \m^2}{2\pi \b}
\end{matrix} \right) I(p, \o; x, x')
	\label{ka_mn_p_o_xp_reg_iCFT}
\end{multline}
\end{subequations}
with
\begin{equation}
\label{I_omega_pxxp}
I(p, \o; x, x')
= \biggl( 1 - \frac{v}{v(x)} \biggr)
	\cos \biggl( [\o - vp \sgn(x-x')] \int_{x'}^{x} \frac{\dd x''}{v(x'')} \biggr),
\end{equation}	
where $v$ is given by the condition that $1 - v/v(x) \in L^1(\mathbb{R})$ [as explained below \eqref{I_omega}].

As in Sect.~\ref{Sec:Applications:Conductivities:Thermal_and_electrical_conductivities}, one can compute the corresponding thermal and electrical conductivities $\ka_{\mathrm{th}}(p, \o; x')$ and $\s_{\mathrm{el}}(p, \o; x')$.
Taking the ratios of their Drude weights and real regular parts when $p = 0$ imply that these alternative conductivities also satisfy the relations in \eqref{Wiedemann-Franz}.
For general $p$, the ratios instead become
\begin{equation}
\label{Wiedemann-Franz_p_xp}
\begin{aligned}
\frac{\ka}{c} 
\frac{D_{\mathrm{th}}(p)}
	{D_{\mathrm{el}}(p)}
& = \frac{\pi^2}{3 \b}
		\left[ 1 + \left( \frac{v\b p}{2\pi} \right)^2 \right], \\
\frac{\ka}{c} 
\frac{\Re \ka_{\mathrm{th}}^{\mathrm{reg}}(p, \o; x')}
	{\Re \s_{\mathrm{el}}^{\mathrm{reg}}(p, \o; x')}
& = \frac{\pi^2}{3 \b}
		\Biggl[
			1
			+
			\frac{
				\int_{-\io}^{\io} \dd x\,
				[\o - vp \sgn(x-x')]^2
				I(p, \o; x, x')
			}{
				\int_{-\io}^{\io} \dd x\,
				I(p, \o; x, x')
			}
			\left( \frac{\beta}{2\pi} \right)^2
		\Biggr].
\end{aligned}
\end{equation}
In this case, one observes that the quantum anomaly also appears for the Drude weights.


\section{Main tools}
\label{Sec:Main_tools}


In this section, we show that projective unitary representations of $\wDiff_{+}(S^1)$ and $\Map(S^1, \mathbb{R})$ on our Hilbert space can be used to flatten out $v(x)$ in \eqref{H_iCFT} as well as $v(x)\b(x)$ and $\b(x)\m(x)$ in \eqref{G_iCFT}.
This is used to prove Proposition~\ref{Proposition:Main}.

To make the presentation somewhat self-consistent, we first recall some well-known facts for representations of $\wDiff_{+}(S^1)$ and $\Map(S^1, \mathbb{R})$, see, e.g., \cite{PrSe, KhWe, GoWa1, GoWa2}.


\subsection{Projective unitary representations of diffeomorphisms}
\label{Sec:Main_tools:PUR_of_diffeos}


The Lie algebra associated with $\wDiff_{+}(S^1)$ is the infinite-dimensional algebra $\Vect(S^1)$ of smooth vector fields on $S^1$.
Any element of $\Vect(S^1)$ can be written as $\zeta(x)\partial_x$ for some smooth function $\zeta(x)$ on $S^1$ with the Lie bracket
$[\zeta_1\partial_x, \zeta_2\partial_x] = (\zeta_1'\zeta_2 - \zeta_2'\zeta_1)\partial_x$.
A central extension of $\Vect(S^1)$ by $\mathbb{R}$ is a Lie algebra
$\Vect(S^1) \oplus \mathbb{R}$
with the Lie bracket
$[(\zeta_1\partial_x, u_1), (\zeta_2\partial_x, u_2)]
= ([\zeta_1\partial_x, \zeta_2\partial_x], \omega(\zeta_1\partial_x, \zeta_2\partial_x))$
given by a 2-cocycle $\omega: \Vect(S^1) \times \Vect(S^1) \to \mathbb{R}$.%
\footnote{%
Recall that a 2-cocycle on a Lie algebra $\mathfrak{g}$ is a continuous alternating bilinear function that satisfies the cocycle identity,
$\omega(X_1, [X_2, X_3]) + \omega(X_2, [X_3, X_1]) + \omega(X_3, [X_1, X_2]) = 0$
for all $X_1, X_2, X_3 \in \mathfrak{g}$.
A 2-cocycle $\omega$ is trivial if there exists a continuous function
$\varphi: \mathfrak{g} \to \mathbb{R}$
such that $\omega = \d\varphi$ defined by $\d\varphi(X_1, X_2) = \varphi([X_1, X_2])$.%
}
A non-trivial example is the Gelfand-Fuchs cocycle  
\begin{equation}
\label{Gelfand-Fuchs_cocycle}
\omega(\zeta_1\partial_x, \zeta_2\partial_x)
= \int_{S^1} \zeta_1'\, \dd \zeta_2'
= \int_{-L/2}^{L/2} \dd x\, \zeta_1'(x) \zeta_2''(x).
\end{equation}
All non-trivial central extensions of $\Vect(S^1)$ by $\mathbb{R}$ are isomorphic, generated by the Gelfand-Fuchs cocycle with the coefficient in front of $\omega(\cdot, \cdot)$ conventionally chosen as $c/12$.
This central extension is the Virasoro algebra, denoted $\vir$.

Consider two commuting copies of $\vir$.
Their generators $L^{\pm}_{n}$ satisfy%
\footnote{%
In Fourier space, $\omega(\cdot, \cdot)$ in \eqref{Gelfand-Fuchs_cocycle} corresponds to $n^3\d_{n+m,0}$ in \eqref{Virasoro_alg_mom_space}, while the cocycle $n\d_{n+m,0}$ is trivial since $\varphi(\ell_n) = \d_{n,0}/2$ implies $\d\varphi(\ell_n, \ell_m) = \varphi([\ell_n, \ell_m]) = n\d_{n+m,0}$.%
}
\begin{equation}
\label{Virasoro_alg_mom_space}
\bigl[ L^{\pm}_{n}, L^{\pm}_{m} \bigr]
= (n-m) L^{\pm}_{n+m} + \frac{c}{12}(n^3 - n) \d_{n+m,0},
\quad
\bigl[ L^{\pm}_{n}, L^{\mp}_{m} \bigr] = 0.
\end{equation}
Defining their inverse Fourier transforms as
\begin{equation}
\label{Tpm_Fourier}
T_{\pm}(x)
= \frac{2\pi}{L^2}\sum_{n=-\io}^{\io} \ee^{\pm\frac{2\pi\ii nx}{L}}
	\left( L^{\pm}_n - \frac{c}{24} \d_{n,0} \right),
\end{equation}
one can show that the commutation relations in \eqref{Virasoro_alg_mom_space} are equivalent to \eqref{Virasoro_alg_pos_space}.

Starting with two commuting highest-weight representations of $\vir$ on the Hilbert space of any unitary CFT, they integrate to two commuting projective unitary representations of $\wDiff_{+}(S^1)$ \cite{GoWa1, GoWa2, ToLa}.
Denote these by $U_{\pm}(f)$ for $f \in \wDiff_{+}(S^1)$.
The generators of $U_{\pm}(f)$ are the components $T_{\pm}(x)$ of the energy-momentum tensor:
\begin{equation}
\label{U_pm_f_infinitesimal}
U_{\pm}(f)
= I \mp \ii \e \int_{-L/2}^{L/2} \dd x\, \zeta(x) T_{\pm}(x) + o(\e)
\end{equation}
for an infinitesimal $f(x) = x + \e \zeta(x)$ with $\zeta(x+L) = \zeta(x)$.
The phases can be chosen so that
\begin{equation}
\label{U_pm_proj_rep}
U_{\pm}(f_1) U_{\pm}(f_2) = \ee^{\pm \ii c B(f_1, f_2)/24\pi} U_{\pm}(f_1 \circ f_2),
\end{equation}
where $B: \wDiff_+(S^1) \times \wDiff_+(S^1) \to \mathbb{R}$ is a non-trivial group 2-cocycle called the Bott cocycle%
\footnote{%
For the second equality: $\log (f_1 \circ f_2)' = \log (f_1' \circ f_2) + \log f_2'$ and $\int_{S^1} \dd(\log f_2')\, \log f_2' = 0$ since $f_2'$ is periodic.%
}%
\begin{equation}
\label{Bott_cocycle}
B(f_1, f_2)
= \frac{1}{2} \int_{S^1} \log (f_1 \circ f_2)'\, \dd(\log f_2')
= \frac{1}{2} \int_{-L/2}^{L/2} \dd x\, [\log f_2'(x)]'\, \log [f_1'(f_2(x))].
\end{equation}
The central extension of $\wDiff_+(S^1)$ corresponding to \eqref{Bott_cocycle} is called the Virasoro-Bott group, and one can show that $B(\cdot, \cdot)$ reduces to $\omega(\cdot, \cdot)$ in \eqref{Gelfand-Fuchs_cocycle} for infinitesimal diffeomorphisms.

For our purposes, only the adjoint action (actually the coadjoint action) of $U_{\pm}(f)$ is needed.
One can show that
\begin{equation}
\label{U_Tpm_Ui}
U_{\pm}(f) T_{\pm}(x) U_{\pm}(f)^{-1}
= f'(x)^2 T_{\pm}(f(x)) - \frac{c}{24\pi} \{ f(x), x \},
\quad
U_{\pm}(f) T_{\mp}(x) U_{\pm}(f)^{-1}
= T_{\mp}(x)
\end{equation}
with $\{ f(x), x \}$ in \eqref{Schwarzian_derivative}.
These can be shown to be consistent with \eqref{U_pm_proj_rep}.
Note also that \eqref{U_Tpm_Ui} implies \eqref{U_Tpm_Ui_prerequisites}.


\subsection{Projective unitary representations of smooth maps}
\label{Sec:Main_tools:PUR_of_smooth_maps}


The group $\Map(S^1, \mathbb{R})$ is an example of a loop group.
We are interested in the $\mathfrak{u}(1)$-current algebra obtained as the central extension by $\mathbb{R}$ of the corresponding loop algebra.
As for $\vir$, we have two commuting copies of the $\mathfrak{u}(1)$-current algebra whose generators $J^{\pm}_{n}$ satisfy
\begin{equation}
\label{Virasoro_and_u1_current_alg_mom_space}
[J^\pm_n,J^\pm_m] = \ka n\d_{n+m,0},
\quad
[J^\pm_n,J^\mp_m] = 0,
\quad
[L_n^\pm,J^\pm_m] = -mJ^\pm_{n+m},
\quad
[L_n^\pm,J^\mp_m] = 0.
\end{equation}
Their inverse Fourier transforms are
\begin{equation}
\label{Jpm_Fourier}
J_{\pm}(x)
= \frac{1}{L} \sum_{n=-\io}^{\io}
	\ee^{\pm\frac{2\pi\ii nx}{L}} J^{\pm}_{n}.
\end{equation}
As before, using \eqref{Tpm_Fourier} and \eqref{Jpm_Fourier}, one can show that the commutation relations in
\eqref{Virasoro_alg_mom_space} and \eqref{Virasoro_and_u1_current_alg_mom_space}
are equivalent to
\eqref{Virasoro_alg_pos_space} and \eqref{Virasoro_and_u1_current_alg_pos_space}.

For any unitary CFT with two commuting highest-weight representations of the $\mathfrak{u}(1)$-current algebra, the latter integrate to two commuting projective unitary representation $V_{\pm}(h)$ of $h \in \Map(S^1, \mathbb{R})$ on the Hilbert space of the theory \cite{GoWa1, ToLa}.
The generators of these are the components $J_{\pm}(x)$ of the $\mathrm{U}(1)$ current:
\begin{equation}
\label{V_pm_h_infinitesimal}
V_{\pm}(h)
= I \mp \ii \eps \int_{-L/2}^{L/2} \dd x\, \xi(x) J_{\pm}(x) + o(\eps)
\end{equation}
for an infinitesimal $h(x) = \eps\xi(x)$ with $\xi(x+L) = \xi(x)$.

To complete the picture, consider the semi-direct product $\Map(S^1, \mathbb{R}) \rtimes \wDiff_{+}(S^1)$ with elements $(h, f)$ and group operation
$(h_1, f_1) \cdot (h_2, f_2) = (f_2^{\ast}h_1 + h_2, f_1 \circ f_2)$,
where $f^{\ast}h = h \circ f$ is the pullback of $h$ by $f$.
Denote by $U_{\pm}(h, f)$ two commuting projective unitary representations of $(h, f) \in \Map(S^1, \mathbb{R}) \rtimes \wDiff_{+}(S^1)$.
Since $(h, f) = (0, f) \cdot (h, \Id)$,
\begin{equation}
U_{\pm}(h, f)
= U_{\pm}(0, f) U_{\pm}(h, \Id)
= U_{\pm}(f) V_{\pm}(h),
\end{equation}
using $U_{\pm}(0, f) = U_{\pm}(f)$ and $U_{\pm}(h, \Id) = V_{\pm}(h)$.
As before,
\begin{equation}
\label{U_pm_V_pm_proj_rep}
U_{\pm}(h_1, f_1) U_{\pm}(h_2, f_2)
= \ee^{\pm \ii C((h_1, f_1),(h_2, f_2))}
	U_{\pm}((h_1, f_1) \cdot (h_2, f_2)),
\end{equation}
where $C((h_1, f_1),(h_2, f_2))$ is a more general group 2-cocycle that includes both the Bott cocycle and the corresponding 2-cocycle for $\Map(S^1, \mathbb{R})$, see, e.g., \cite{PrSe, KhWe, GoWa1, GoWa2}.

As before, we are only interested in the (co)adjoint action.
One can show that
\begin{equation}
\label{U_Jpm_Ui}
U_{\pm}(f) J_{\pm}(x) U_{\pm}(f)^{-1}
= f'(x) J_{\pm}(f(x)),
\quad
U_{\pm}(f) J_{\mp}(x) U_{\pm}(f)^{-1}
= J_{\mp}(x)
\end{equation}
and
\begin{subequations}
\label{V_Tpm_Jpm_Vi}
\begin{align}
V_{\pm}(h) T_{\pm}(x) V_{\pm}(h)^{-1}
& = T_{\pm}(x) + h'(x) J_{\pm}(x) + \frac{\ka}{4\pi}h'(x)^2,
& V_{\pm}(h) T_{\mp}(x) V_{\pm}(h)^{-1}
& = T_{\mp}(x),
		\label{V_Tpm_Vi} \\
V_{\pm}(h) J_{\pm}(x) V_{\pm}(h)^{-1}
& = J_{\pm}(x) + \frac{\ka}{2\pi}h'(x),
& V_{\pm}(h) J_{\mp}(x) V_{\pm}(h)^{-1}
& = J_{\mp}(x).
		\label{V_Jpm_Vi}
\end{align}
\end{subequations}
The above imply \eqref{U_Jpm_Ui_prerequisites} and \eqref{V_Tpm_Jpm_Vi_prerequisites} and can be shown to be consistent with \eqref{U_pm_V_pm_proj_rep}.

\begin{remark}[Sugawara construction]
\label{Remark:Sugawara_construction}
Given a $\mathfrak{u}(1)$-current algebra, it is possible to construct a corresponding Virasoro algebra as follows:
\begin{equation}
\label{Sugawara_construction}
2\ka L^{\pm}_{n} = \sum_{m} \! \fermionWick{ J^{\pm}_{n-m} J^{\pm}_{m} } \!.
\end{equation}
This is the so-called Sugawara construction, cf., e.g., \cite{KnZa, Zamo}.
Note that Kronig's identity in Remark~\ref{Remark:Bosonization} follows as a special case.
For a CFT with Virasoro algebra given by the Sugawara construction, any primary field $\Phi$ [satisfying \eqref{U_Phi_Ui_prerequisites} and \eqref{V_Phi_Vi_prerequisites}] must have weights that obey
\begin{equation}
\label{Sugawara_condition}
2\ka \D^\pm_{\Phi} = (\tau^\pm_{\Phi})^2.
\end{equation}
(To see this, let $|\Phi\rangle$ be the primary state associated with $\Phi$, then $L^{\pm}_{0} |\Phi\rangle = \D^{\pm}_{\Phi} |\Phi\rangle$ and $J^{\pm}_{0} |\Phi\rangle = \tau^{\pm}_{\Phi} |\Phi\rangle$, which together with \eqref{Sugawara_construction} implies \eqref{Sugawara_condition}, cf., e.g., \cite{Meln}.)
In addition, $c = \dim(\mathfrak{u}(1)) = 1$ for such a CFT, cf.\ \cite{KnZa}.
Note that these properties apply to the examples in Sect.~\ref{Sec:Prerequisites:Examples}.
\end{remark}
%


\subsection{Proof of Proposition~\texorpdfstring{\ref{Proposition:Main}}{}}
\label{Sec:Main_tools:Proof_of_result}


The proof relies on three lemmas.

\begin{lemma}
\label{Lemma:fgh}
Given $v(x)$, $\b(x)$, and $\m(x)$ in Proposition~\ref{Proposition:Main}, then $f$ in \eqref{f_v_x} and $g$ in \eqref{g_vbeta_x} define elements in $\wDiff_{+}(S^1)$ and $h$ in \eqref{h_vmu_x} defines an element in $\Map(S^1, \mathbb{R})$.
\end{lemma}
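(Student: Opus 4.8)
The plan is to verify three properties for each of $f$, $g$, and $\ee^{\ii h}$: smoothness, the correct (quasi-)periodicity, and — for $f$ and $g$ — strict monotonicity of the derivative, which together characterize membership in $\wDiff_{+}(S^1)$ and $\Map(S^1,\textnormal{U}(1))$ as recalled at the start of Section~\ref{Sect:Main_tools}. First I would treat $f$. By hypothesis $v(x)=v(x+L)>0$ is smooth, so $1/v(x)$ is a smooth positive $L$-periodic function and $f(x)=\int_{0}^{x}\dd x'\,v_{0}/v(x')$ is smooth with $f'(x)=v_{0}/v(x)>0$; thus $f$ is an orientation-preserving diffeomorphism of $\mathbb{R}$. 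For the covering condition I would compute
\begin{equation*}
f(x+L)-f(x)
= \int_{x}^{x+L} \dd x'\, \frac{v_{0}}{v(x')}
= v_{0} \int_{-L/2}^{L/2} \dd x'\, \frac{1}{v(x')}
= L,
\end{equation*}
where the middle equality uses $L$-periodicity of $1/v$ and the last uses the definition of $v_{0}$ in \eqref{f_v_x}. Hence $f(x+L)=f(x)+L$, so $f\in\wDiff_{+}(S^1)$.

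Next I would handle $g$ by the identical argument with $v(x)$ replaced by $v(x)\b(x)$: since $\b(x)=\b(x+L)>0$ is smooth, the product $v(x)\b(x)$ is smooth, positive, and $L$-periodic, so $g$ is smooth with $g'(x)=v_{0}\b_{0}/\bigl(v(x)\b(x)\bigr)>0$, and the normalization of $v_{0}\b_{0}$ in \eqref{g_vbeta_x} gives $g(x+L)-g(x)=v_{0}\b_{0}\int_{-L/2}^{L/2}\dd x'/\bigl(v(x')\b(x')\bigr)=L$. For $h$ in \eqref{h_vmu_x}, the integrand $\bigl(\m(x')\b(x')-\m_{0}\b_{0}\bigr)/\bigl(v(x')\b(x')\bigr)$ is smooth and $L$-periodic (as $\m$, $\b$, $v$ all are), so $h$ is smooth; there is no positivity requirement. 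The periodicity $h(x+L)=h(x)$ follows because
\begin{equation*}
h(x+L)-h(x)
= \int_{-L/2}^{L/2} \dd x'\, \frac{\m(x')\b(x') - \m_{0}\b_{0}}{v(x')\b(x')}
= \int_{-L/2}^{L/2} \dd x'\, \frac{\m(x')}{v(x')}
	- \m_{0}\,v_{0}^{-1}\!\!\int_{-L/2}^{L/2}\!\!\frac{\dd x'}{v(x')\b(x')}\,\b_{0},
\end{equation*}
and the first term equals $L\m_{0}/v_{0}$ by the definition of $\m_{0}$ in \eqref{h_vmu_x}, while the second equals $(\m_{0}\b_{0}/v_{0})\cdot(L/(v_{0}\b_{0}))\cdot v_{0}$... more cleanly: the second integral is $L/(v_{0}\b_{0})$ by \eqref{g_vbeta_x}, so the second term is $\m_{0}\b_{0}\cdot L/(v_{0}\b_{0})=L\m_{0}/v_{0}$, and the difference vanishes. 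Hence $h(x+L)=h(x)$, so $\ee^{\ii h}$ is a well-defined smooth map $S^1\to\textnormal{U}(1)$.

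I do not expect any serious obstacle here — the lemma is essentially a bookkeeping check that the normalizations of $v_{0}$, $v_{0}\b_{0}$, and $\m_{0}$ in \eqref{f_v_x}--\eqref{h_vmu_x} were chosen precisely so that the affine period conditions hold. The only mild care needed is in the $h$ computation, where one must use both normalization identities \eqref{g_vbeta_x} and \eqref{h_vmu_x} simultaneously; I would present that cancellation explicitly. One could also remark that finiteness of $v_{0}$ and $v_{0}\b_{0}$ is guaranteed because a continuous function on the compact circle that is everywhere positive is bounded away from $0$, so the reciprocal integrals converge.
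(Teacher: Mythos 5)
Your proposal is correct and follows essentially the same route as the paper's (much terser) proof: positivity of the integrand gives $f' > 0$, hence an orientation-preserving diffeomorphism, and the normalizations of $v_{0}$, $v_{0}\b_{0}$, and $\m_{0}$ are exactly what make $f(x+L)=f(x)+L$, $g(x+L)=g(x)+L$, and $h(x+L)=h(x)$ hold; you simply write out the cancellations that the paper leaves implicit (in particular the two-identity cancellation for $h$, which is the only step requiring any care).
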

\begin{proof}
Since $v(x) > 0$, $f$ is an orientation-preserving diffeomorphism (by the inverse function theorem), and the choice of $v_0$ then implies that $f \in \wDiff_{+}(S^1)$.
Similarly, one can show that $g \in \wDiff_+(S^1)$ and that $h \in \Map(S^1, \mathbb{R})$.
\end{proof}

It follows from Lemma~\ref{Lemma:fgh} and Sect.~\ref{Sec:Main_tools:PUR_of_diffeos} that $U(f) = U_{+}(f)U_{-}(f)$ and $U(g) = U_{+}(g)U_{-}(g)$ define projective unitary representations of $f$ and $g$ on the Hilbert space of any unitary CFT.
In addition, it follows from Sect.~\ref{Sec:Main_tools:PUR_of_smooth_maps} that $V(h) = V_{+}(h)V_{-}(h)$ defines a projective unitary representation of $h$ on the Hilbert space of any unitary CFT with a $\mathfrak{u}(1)$-current algebra.
These representations of $f$, $g$, and $h$ given by \eqref{fgh} are used in what follows.

\begin{lemma}
\label{Lemma:H_iCFT}
Given $H$ in \eqref{H_iCFT} and $f$ in \eqref{f_v_x}, then
\begin{equation}
\label{Uf_H_Ufi}
U(f) H U(f)^{-1}
= H_{0} - \int_{-L/2}^{L/2} \dd x\, \frac{\cS(x)}{v(x)}
\end{equation}
with $H_{0}$ in \eqref{H_sCFT} and $\cS(x)$ in \eqref{cS}.
Moreover, given \eqref{H_dynamics} and $\tilde{x}^{\pm}$ in \eqref{x_pm_tilde}, then
\begin{subequations}
\begin{align}
T_{\pm}(x; t)
& = \biggl( \frac{v(\tilde{x}^{\mp})}{v(x)} \biggr)^2 T_{\pm}(\tilde{x}^{\mp})
			+ \frac{\cS(\tilde{x}^{\mp}) - \cS(x)}{2 v(x)^2},
		\label{HtTpmHt} \\
J_{\pm}(x; t)
& = \frac{v(\tilde{x}^{\mp})}{v(x)} J_{\pm}(\tilde{x}^{\mp}),
		\label{HtJpmHt} \\
\Phi(x; t)
& = \biggl( \frac{v(\tilde{x}^-)}{v(x)} \biggr)^{\D^+_{\Phi}}
		\biggl( \frac{v(\tilde{x}^+)}{v(x)} \biggr)^{\D^-_{\Phi}}
		\Phi(\tilde{x}^-, \tilde{x}^+),
		\label{HtPhiHt}
\end{align}
\end{subequations}
where the latter is for any Virasoro primary field $\Phi$ with conformal weights $(\D^+_{\Phi}, \D^-_{\Phi})$.
\end{lemma}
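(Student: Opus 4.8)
The plan is to exploit the key fact from Section~\ref{Sect:Main_tools:PUR_of_diffeos} that the components $T_\pm(x)$ generate the projective unitary representations $U_\pm(f)$, so that conjugation by $U(f)$ acts on $T_\pm$ exactly like the conformal transformation \eqref{U_Tpm_Ui}. First I would prove \eqref{Uf_H_Ufi}: insert $I = U(f)^{-1}U(f)$ and compute $U(f) H U(f)^{-1} = \int \dd x\, v(x)\, U(f)[T_+(x)+T_-(x)]U(f)^{-1}$ using \eqref{U_Tpm_Ui}. This yields $\int \dd x\, v(x)\{ f'(x)^2 T_+(f(x)) + f'(x)^2 T_-(f(x)) - \tfrac{c}{24\pi}[\{f(x),x\}+\{f(x),x\}]\}$ (the two Schwarzian terms coincide since $f$ is the same map for both chiralities). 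Now change variables $y = f(x)$, noting $\dd x = \dd y / f'(x)$ and, crucially from \eqref{f_v_x}, that $f'(x) = v_0/v(x)$, so $v(x) f'(x)^2 \dd x = v(x) f'(x) \dd y = v_0\, \dd y$; this flattens the velocity and produces $\int \dd y\, v_0 [T_+(y)+T_-(y)] = H_0$. The remaining Schwarzian piece is $-\int \dd x\, v(x)\tfrac{c}{24\pi}\cdot 2\{f(x),x\}$, and one checks by a direct computation using $f'(x) = v_0/v(x)$ that $\tfrac{c}{12\pi} v(x)\{f(x),x\} = \cS(x)/v(x)$ with $\cS(x)$ as in \eqref{cS} — this is the one genuine calculation, plugging $f' = v_0/v$, $f'' = -v_0 v'/v^2$, $f''' = -v_0(v''/v^2 - 2(v')^2/v^3)$ into \eqref{Schwarzian_derivative} and simplifying.

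Next I would derive the Heisenberg evolution formulas. The point is that \eqref{Uf_H_Ufi} says $U(f) H U(f)^{-1} = H_0 - C$ where $C = \int \dd x\, \cS(x)/v(x)$ is a c-number, hence commutes with everything. Therefore $\ee^{\ii H t} = U(f)^{-1} \ee^{\ii H_0 t} \ee^{-\ii C t} U(f)$, and the scalar factor cancels in the conjugation $\cO(x;t) = \ee^{\ii H t}\cO(x)\ee^{-\ii H t} = U(f)^{-1} \ee^{\ii H_0 t}\, [U(f)\cO(x)U(f)^{-1}]\, \ee^{-\ii H_0 t}\, U(f)$. So I need two ingredients: (a) how $U(f)$ conjugates the field, which is \eqref{U_Tpm_Ui}, \eqref{U_Jpm_Ui}, \eqref{U_Phi_Ui_prerequisites}; and (b) how $\ee^{\ii H_0 t}$ evolves the transformed object — but $H_0$ is the \emph{homogeneous} Hamiltonian with velocity $v_0$, so $\ee^{\ii H_0 t} T_\pm(y)\ee^{-\ii H_0 t} = T_\pm(y \pm v_0 t)$ and likewise primaries shift by $\mp v_0 t$ in the appropriate light-cone argument (standard CFT, with $v$ replaced by $v_0$). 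Composing: $U(f)$ sends $T_\pm(x) \mapsto f'(x)^2 T_\pm(f(x)) - \tfrac{c}{24\pi}\{f(x),x\}$; then $H_0$-evolution shifts $f(x) \mapsto f(x) \pm v_0 t$; then $U(f)^{-1}$, which implements the conformal map $f^{-1}$, sends $T_\pm(f(x)\pm v_0 t)$ back using \eqref{U_Tpm_Ui} with $f$ replaced by $f^{-1}$. Writing $\tilde x^{\mp} = f^{-1}(f(x)\pm v_0 t)$ as in \eqref{x_pm_tilde}, the chain rule $\frac{\dd}{\dd x} f^{-1}(f(x)\pm v_0 t) = f'(\tilde x^\mp)^{-1} f'(x)\cdot(\text{derivative w.r.t.\ the argument})$ combined with $f'(x)=v_0/v(x)$ converts all the $f'$-factors into ratios $v(\tilde x^\mp)/v(x)$, producing \eqref{HtTpmHt}–\eqref{HtPhiHt}. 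For $T_\pm$ the two Schwarzian contributions (from $U(f)$ and from $U(f)^{-1}$) must be reassembled; using the cocycle identity for the Schwarzian under composition, $\{f^{-1}(u),u\}\big|_{u=f(x)\pm v_0t}$ and $\{f(x),x\}$ combine to give precisely $[\cS(\tilde x^\mp) - \cS(x)]/(c v(x)^2/12\pi)$ times the right prefactor, yielding the stated $[\cS(\tilde x^\mp)-\cS(x)]/2v(x)^2$.

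The main obstacle I anticipate is bookkeeping the Schwarzian-derivative terms cleanly: one has to track the anomaly contribution coming from $U(f)$ \emph{and} from $U(f)^{-1}$, push the first one through the $H_0$-evolution (which merely translates its argument), and then invoke the Schwarzian cocycle relation $\{ (f_1\circ f_2)(x), x\} = \{f_1(f_2(x)), f_2(x)\} f_2'(x)^2 + \{f_2(x),x\}$ to collapse everything into the difference $\cS(\tilde x^\mp) - \cS(x)$. Equivalently — and this is the cleaner route I would actually take — I would avoid splitting into $U(f)$, $H_0$-evolution, $U(f)^{-1}$ and instead note that $\ee^{\ii H t}$ itself is, up to the scalar $\ee^{-\ii C t}$, a projective unitary implementing the diffeomorphism $x \mapsto \tilde x^{\mp}_t(x)$ on each chiral sector (by the group property of $\tilde x_t^\pm$ noted after the proposition and the fact that these solve \eqref{dxpm_dx}), so \eqref{HtTpmHt}–\eqref{HtPhiHt} follow in one step from \eqref{U_Tpm_Ui}, \eqref{U_Jpm_Ui}, \eqref{U_Phi_Ui_prerequisites} applied with $f$ replaced by the map $x\mapsto\tilde x_t^\mp$, together with the identity (to be verified by direct differentiation) that the Schwarzian of $x\mapsto\tilde x_t^\mp(x)$ equals $[\cS(\tilde x^\mp)-\cS(x)] \cdot 12\pi/(c\, v(x)^2) \cdot (\partial_x \tilde x^\mp)^{?}$ — organizing which factors of $\partial_x\tilde x^\mp$ appear is exactly the delicate point. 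The $J_\pm$ and $\Phi$ formulas carry no anomaly and are purely a chain-rule computation, so those are routine once the framework is set up.
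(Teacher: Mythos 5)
Your proposal follows essentially the same route as the paper's proof: conjugate $H$ by $U(f)$ using \eqref{U_Tpm_Ui} and the substitution $f'(x)=v_0/v(x)$ to flatten the velocity and identify $\cS(x)=c\,v(x)^2\{f(x),x\}/12\pi$ with \eqref{cS}, then write $\ee^{\ii Ht}=U(f)^{-1}\ee^{\ii H_0t}\ee^{-\ii Ct}U(f)$ and compose conformal map, free $H_0$-evolution, and inverse map, collecting the two Schwarzian contributions (the paper does this with $\{y,x\}=-(\dd y/\dd x)^2\{x,y\}$ rather than the composition cocycle, which is equivalent). The only slip is the sign of the free evolution: with the paper's conventions $\ee^{\ii H_0t}T_{\pm}(y)\ee^{-\ii H_0t}=T_{\pm}(y\mp v_0t)$, not $T_{\pm}(y\pm v_0t)$, which is precisely what places $T_{\pm}(x;t)$ at $\tilde{x}^{\mp}=f^{-1}(f(x)\mp v_0t)$ as defined in \eqref{x_pm_tilde}; your parallel statement that primary-field arguments shift by $\mp v_0t$ is the correct one, so this is an internal inconsistency of labels rather than a flaw in the method.
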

\begin{proof}
Setting $y = f(x)$, it follows from \eqref{U_Tpm_Ui_prerequisites} that
\begin{subequations}
\begin{align}
U(f) T_{\pm}(x) U(f)^{-1}
& = T_{\pm}(y) \left( \frac{\dd y}{\dd x} \right)^2 - \frac{c}{24\pi} \{ y, x \},
		\label{U_Tpm_Ui_y} \\
U(f)^{-1} T_{\pm}(y) U(f)
& = T_{\pm}(x) \left( \frac{\dd x}{\dd y} \right)^2 - \frac{c}{24\pi} \{ x, y \},
		\label{Ui_Tpm_U_y}
\end{align}
\end{subequations}
where we used $\{ y, x \} = - (\dd y/\dd x)^2 \{ x, y \}$ \cite{FMS}.
Since $\dd x/\dd y = v(x)/v_{0}$, \eqref{U_Tpm_Ui_y} implies
\begin{equation}
\label{UintTpmUi}
U(f) \left( \int_{-L/2}^{L/2} \dd x\, v(x) T_{\pm}(x) \right) U(f)^{-1}
= \int_{-L/2}^{L/2} \dd y\, v_{0} T_{\pm}(y)
		- \frac{c}{24\pi} \int_{-L/2}^{L/2} \dd x\, v(x) \{ y, x \}.
\end{equation}
Thus, conjugating $H$ in \eqref{H_iCFT} with $U(f)$ yields \eqref{Uf_H_Ufi} with the Schwarzian-derivative contribution $\cS(x) = c v(x)^2 \{ f(x), x \} /12 \pi$, where the latter can equivalently be written as in \eqref{cS} using $f$ in \eqref{f_v_x}.

To prove \eqref{HtTpmHt}, note that \eqref{Uf_H_Ufi} together with \eqref{U_Tpm_Ui_y} and \eqref{Ui_Tpm_U_y} implies
\begin{align}
T_{\pm}(x; t)
& = U(f)^{-1} \ee^{\ii H_{0} t} U(f)
		T_{\pm}(x)
		U(f)^{-1} \ee^{-\ii H_{0} t} U(f) \nonumber \\
& = f'(x)^2
		\left[
			T_{\pm}(\tilde{x}^{\mp}) \bigl[ (f^{-1})'(y \mp v_{0}t) \bigr]^2
			- \frac{c}{24\pi} \{ f^{-1}(y \mp v_{0}t), y \mp v_{0}t \}
		\right]
		- \frac{c}{24\pi} \{ y, x \}.
		\label{Tpm_xt_in_proof}
\end{align}
Again, since ${\dd y}/{\dd x} = {v_{0}}/{v(x)}$, one can show that $(f^{-1})'(y \mp v_{0}t) = v(\tilde{x}^{\mp})/v_{0}$ as well as
\begin{equation}
\frac{c}{24 \pi} \{ y, x \}
= \frac{\cS(x)}{2 v(x)^2},
\quad
\frac{c}{24 \pi} \{ f^{-1}(y \mp v_{0}t), y \mp v_{0}t \}
= - \frac{\cS(\tilde{x}^{\mp})}{2 v_{0}^2},
\end{equation}
where we used the formula for the Schwarzian derivative below \eqref{Ui_Tpm_U_y} in the last equation.
Combining this with \eqref{Tpm_xt_in_proof} yields the desired result.

The proofs of \eqref{HtJpmHt} and \eqref{HtPhiHt} follow analogously by using the transformations in \eqref{U_Jpm_Ui_prerequisites} and \eqref{U_Phi_Ui_prerequisites} together with their inverses.
\end{proof}
\begin{lemma}
\label{Lemma:G_iCFT}
Given $G$ in \eqref{G_iCFT} and $g$ and $h$ in \eqref{fgh}, then
\begin{equation}
\label{Ug_Vh_H_Vhi_Ugi}
U(g) V(h) \ee^{-G} V(h)^{-1} U(g)^{-1}
= \ee^{-\b_{0} (H_{0} - \m_{0}Q_{0}) + \mathrm{const}}
\end{equation}
with $H_{0}$ and $Q_{0}$ in \eqref{H_sCFT}.
\end{lemma}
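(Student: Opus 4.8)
The plan is to exploit that conjugation by the unitary $U(g)V(h)$ is an algebra automorphism, so that $U(g)V(h)\ee^{-G}V(h)^{-1}U(g)^{-1} = \exp\!\bigl(-U(g)V(h)\,G\,V(h)^{-1}U(g)^{-1}\bigr)$, where the projective phases of $U$ and $V$ drop out under conjugation and therefore cause no ambiguity. It then suffices to show $U(g)V(h)\,G\,V(h)^{-1}U(g)^{-1} = \b_0(H_0 - \m_0 Q_0) + \textnormal{const}$. I would obtain this in two steps: first conjugate by $V(h)$ to flatten out the chemical-potential term, then conjugate by $U(g)$ to flatten out $v(x)\b(x)$ in the energy term, each step producing only c-number contributions.

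\emph{Step 1 (conjugation by $V(h)$).} Apply \eqref{V_Tpm_Jpm_Vi_prerequisites} to $G$ in \eqref{G_iCFT}: the operator $T_+(x)+T_-(x)$ gets replaced by $T_+(x)+T_-(x) + h'(x)[J_+(x)+J_-(x)] + \tfrac{\ka}{2\pi}h'(x)^2$, and $J_+(x)+J_-(x)$ by $J_+(x)+J_-(x) + \tfrac{\ka}{\pi}h'(x)$. The coefficient multiplying the operator $J_+(x)+J_-(x)$ in the integrand is then $\b(x)[v(x)h'(x) - \m(x)]$, and the definition of $h$ in \eqref{h_vmu_x} gives precisely $v(x)h'(x) - \m(x) = -\m_0\b_0/\b(x)$, so this coefficient collapses to the \emph{constant} $-\m_0\b_0$. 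Since the remaining $h'(x)^2$ and $h'(x)$ pieces are pure c-numbers, this yields $V(h)\,G\,V(h)^{-1} = \int_{-L/2}^{L/2}\dd x\,\b(x)v(x)[T_+(x)+T_-(x)] - \m_0\b_0 Q + \textnormal{const}$.

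\emph{Step 2 (conjugation by $U(g)$).} First, $U(g)$ commutes with $Q$ in \eqref{Q_iCFT}: by \eqref{U_Jpm_Ui} and the substitution $y = g(x)$ (legitimate since $g \in \wDiff_+(S^1)$ by Lemma~\ref{Lemma:fgh}), one gets $U(g)QU(g)^{-1} = Q = Q_0$. The leftover term $\int\dd x\,\b(x)v(x)[T_+(x)+T_-(x)]$ is exactly of the form treated in Lemma~\ref{Lemma:H_iCFT} with $v(x)$ replaced by $v(x)\b(x)$: applying \eqref{U_Tpm_Ui}, changing variables to $y=g(x)$, and using $v(x)\b(x)g'(x) = v_0\b_0$ from \eqref{g_vbeta_x}, the operator part becomes $v_0\b_0\int\dd y\,[T_+(y)+T_-(y)] = \b_0 H_0$, while the Schwarzian $\{g(x),x\}$ only produces a further c-number. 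Collecting everything, $U(g)V(h)\,G\,V(h)^{-1}U(g)^{-1} = \b_0 H_0 - \m_0\b_0 Q_0 + \textnormal{const} = \b_0(H_0 - \m_0 Q_0) + \textnormal{const}$, which is \eqref{Ug_Vh_H_Vhi_Ugi}.

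The part requiring the most care is the accounting of the anomalous c-number terms (the $h'(x)^2$, $h'(x)$, and Schwarzian contributions) and the justification that the adjoint-action formulas of Sections~\ref{Sect:Main_tools:PUR_of_diffeos}--\ref{Sect:Main_tools:PUR_of_smooth_maps} and the change of variables $y=g(x)$ may be applied termwise to these operator-valued distributions; both are secured by Lemma~\ref{Lemma:fgh} together with those sections. Since Step~2 is a verbatim repetition of the computation behind \eqref{Uf_H_Ufi} in Lemma~\ref{Lemma:H_iCFT}, I do not anticipate any deeper obstacle.
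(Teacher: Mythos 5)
Your proposal is correct and follows essentially the same route as the paper, which simply states that the derivation is analogous to Lemma~\ref{Lemma:H_iCFT} using the transformation rules \eqref{U_Tpm_Ui_prerequisites}, \eqref{U_Jpm_Ui_prerequisites}, and \eqref{V_Tpm_Jpm_Vi_prerequisites}; your two-step conjugation (first $V(h)$ to reduce the chemical-potential term to $-\m_0\b_0 Q$ via $h'(x) = [\m(x)\b(x)-\m_0\b_0]/v(x)\b(x)$, then $U(g)$ with $g'(x) = v_0\b_0/v(x)\b(x)$ to flatten the energy term, with the Schwarzian and $h'$-dependent pieces absorbed into the constant) is precisely the computation the paper leaves implicit.
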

\begin{proof}
The derivation is analogous to that of Lemma~\ref{Lemma:H_iCFT} using the transformation rules in \eqref{U_Tpm_Ui_prerequisites}, \eqref{U_Jpm_Ui_prerequisites}, and \eqref{V_Tpm_Jpm_Vi_prerequisites}.
\end{proof}

The results in Proposition~\ref{Proposition:Main} follow straightforwardly from the lemmas above and by computing
$\tilde{\cO}(x; t) =  U(g) V(h) \cO(x; t) V(h)^{-1} U(g)^{-1}$
for $\cO(x; t)$ equal to
$T_{\pm}(x; t)$ using \eqref{U_Tpm_Ui_prerequisites} and \eqref{V_Tpm_Vi_prerequisites},
$J_{\pm}(x; t)$ using \eqref{U_Jpm_Ui_prerequisites} and \eqref{V_Jpm_Vi_prerequisites},
and
$\Phi(x; t)$ using \eqref{U_Phi_Ui_prerequisites} and \eqref{V_Phi_Vi_prerequisites}. 


\section{Concluding remarks}
\label{Sec:Concluding_remarks}


In this paper, we defined inhomogeneous conformal field theory as a 1+1-dimensional CFT with a smooth position-dependent velocity $v(x)$ explicitly breaking translation invariance.
We showed how exact analytical results for such models out of equilibrium can be obtained using projective unitary representations of diffeomorphisms and smooth maps.
In particular, we derived explicit formulas for the inhomogeneous dynamics and for the thermal and electrical conductivities, which generalize well-known results for standard CFT.

The conductivities were computed in two ways:
The first based on a dynamical approach and the second using a Green-Kubo formula that we derived.
We stress that the equivalence between these two is non-trivial.
The first is fully dynamical, here based on a quantum quench from initial states with inverse-temperature and chemical-potential profiles given by a kink-like function $W(x)$, while the second is based on equilibrium current-current correlation functions.
On general grounds, they must be equivalent, cf.\ \cite{Spo2}, but verifying this is not straightforward; see, e.g., \cite{IlPe} for a discussion of such an equivalence between dynamically computed Drude weights and Green-Kubo-type formulas for one-dimensional lattice fermions and \cite{HeTe, MPT} for a review and recent related results for gapped quantum systems.
In particular, when deriving \eqref{D_th_ka_th_reg_iCFT} using the dynamical approach, it becomes clear that the factor $1 + ({\o\b}/{2\pi})^2$ is due to a quantum anomaly originating from a Schwinger term [see \eqref{Virasoro_alg_pos_space}] which appears ubiquitously in CFT.
This observation would not be evident from the Green-Kubo approach, even if the final expressions are the same.
As discussed in Sect.~\ref{Sec:Introduction}, the physical significance of this quantum anomaly includes a generalization of the Wiedemann-Franz law within inhomogeneous CFT to finite times.
Moreover, we note that $D_{\mathrm{th}}$ in \eqref{D_th_ka_th_reg_iCFT} and $D_{\mathrm{el}}$ in \eqref{D_el_s_el_reg_iCFT} are the same universal results as in standard CFT, see, e.g., \cite{GLM}, while $\Re \ka_{\mathrm{th}}^{\mathrm{reg}}(\o)$ and $\Re \s_{\mathrm{el}}^{\mathrm{reg}}(\o)$ are non-universal since they depend on the details of the functions $W(x)$ and $v(x)$.
Finally, we also computed conductivities $\ka_{\mathrm{th}}(p, \o; x')$ and $\s_{\mathrm{el}}(p, \o; x')$ as the linear responses to perturbations at position $x'$, which exhibit the same quantum anomaly but whose real regular parts are independent of the details of the perturbations.

This paper both generalizes and lays the mathematical foundations for \cite{LaMo2}, where we explicitly showed that a random $v(x)$ leads to the emergence of diffusive heat transport after averaging over the randomness.
Two approaches were given in \cite{LaMo2}:
The first based on random inhomogeneous dynamics and the second using the explicit expression for $\ka_{\mathrm{th}}(\omega)$.
Regarding the first approach, we emphasize that the inhomogeneous dynamics is encoded in the generalized light-cone coordinates $\tilde{x}^{\pm} = \tilde{x}^{\pm}_{t}(x)$ satisfying \eqref{dxpm_dx} for fixed $v(x)$.
Even if our results depend on these coordinates in explicit ways, such as in \eqref{cEcJ_rj_iCFT_neq}, extracting information remains complicated.
In particular, it is difficult to compute the average $\mathbb{E}[\cdot]$ for a $v(x)$ given by a random function.
Inspired by wave propagation in random media \cite{Ish}, this was investigated in \cite{LaMo2} by instead directly studying the random partial differential equations that
the expectations of the energy density and heat current operators satisfy.
In the second approach, we subtracted a Drude peak as in \eqref{ka_th} and \eqref{D_th_ka_th_reg_iCFT} with $v^{-1} = \mathbb{E}[v(x)^{-1}]$ from $\ka_{\mathrm{th}}(\omega)$, which after averaging gave an explicit expression for $\mathbb{E}[\Re \ka_{\mathrm{th}}^{\mathrm{reg}}(\omega)]$.
The limit $\omega \to 0$ could safely be taken in this expression and yielded a non-zero value.
One interpretation is that the normal diffusive contribution in \cite{LaMo2} is an emergent phenomenon in the sense that it reflects a lack of knowledge about mesoscopic details, which manifests itself as diffusion on larger scales after averaging.
It would be interesting to better understand this, including also for the anomalous diffusive contribution in \cite{LaMo2}, as well as how to interpret the results for a fixed velocity $v(x)$.

As a final remark, we note that emergent Euler-scale hydrodynamic results, see, e.g., \cite{Doyon:Lecture_notes}, can be derived from the exact infinite-volume results in Sect.~\ref{Sec:Applications}.
This scaling involves replacing $v(\cdot)$, $\beta(\cdot)$, $\mu(\cdot)$, and $(x, t)$ by $v(\cdot/\lambda)$, $\beta(\cdot/\lambda)$, $\mu(\cdot/\lambda)$, and $(\lambda  x, \lambda t)$, respectively, for some $\lambda > 0$ and taking the limit $\lambda \to \io$; cf.\ \cite{Moo:GHD-NLLM} where this was studied for the non-local Luttinger model.
[Naturally, for our generalized light-cone coordinates, this scaling entails that $\tilde{x}^{\pm}$ are replaced by $\lambda \tilde{x}^{\pm}$.]
For the results in Sect.~\ref{Sec:Applications:Densities_and_currents}, the effects of sending $\lambda \to \io$ are that $\cS(x)$ and $\cT(x)$ in \eqref{cEcJ_rj_iCFT_neq} and \eqref{F_x_G_x_iCFT_neq} disappear, while the other terms remain unaffected.
It would be interesting to further explore connections with Euler-scale hydrodynamics for inhomogeneous dynamics.

\paragraph{Acknowledgements:}
I want to express my gratitude to Edwin Langmann for valuable suggestions and collaboration at an early stage of this work.
I also want to thank Eddy Ardonne, Jens Bardarson, Benjamin Doyon, Krzysztof Gaw\k{e}dzki, Gian Michele Graf, Bastien Lapierre, Jonathan Lindgren, Jouko Mickelsson, Blagoje Oblak, and Herbert Spohn for helpful discussions and remarks.
Financial support from the Wenner-Gren Foundations (No.\ WGF2019-0061) is gratefully acknowledged.


\begin{appendices}


\section{Linear-response theory}
\label{App:Linear-response_theory}


In this appendix, we review linear-response theory for 1+1-dimensional systems in the case of an arbitrary number of conserved charges.%
\footnote{%
Cf.\ Footnote~\ref{Footnote:Higher_conserved_charges}.%
}
In particular, the formulas in \eqref{ka_mn_o_dynamic} and \eqref{ka_mn_o_GK} for the conductivities are derived and shown to give the same result.%
\footnote{%
The presentation here is an updated version of the corresponding appendix in \cite{Moo}, which in turn was based on private correspondence with Krzysztof Gaw\k{e}dzki when working on \cite{GLM}.%
}


\subsection{Linear response in closed quantum systems}
\label{App:Linear-response_theory:Linear_response_in_closed_quantum_systems}


Let $H_{\mathrm{sys}}$ be the Hamiltonian [not necessarily the one in \eqref{H_iCFT}] for some system that, in general, can be inhomogeneous.
We introduce the time-dependent Hamiltonian
$H_{\mathrm{sys}}(\boldsymbol{\la}(s))
= H_{\mathrm{sys}} - \sum_{n \geq 1} \la_{n}(s) V_{n}$,
$\boldsymbol{\la}(s) = (\la_{1}(s), \la_{2}(s), \ldots)$,
where $\la_{n}(s)$ are functions of time $s$ and $V_{n}$ are perturbations ($n = 1, 2, \ldots$).
Suppose that the system is in the equilibrium initial state
$\hat{\rho} = \ee^{-\b H_{\mathrm{sys}}}/\Tr [ \ee^{-\b H_{\mathrm{sys}}} ]$ at times $t < t_{0}$ and consider the time-evolved state $\hat{\rho}_{\boldsymbol{\la}(\cdot)}(t)$ under $H_{\mathrm{sys}}(\boldsymbol{\la}(s))$ for $t > t_{0}$,
\begin{equation}
\label{rho_t_lambda}
\hat{\rho}(t, \boldsymbol{\la}(\cdot))
= \overleftarrow{\cT} \!
	\exp \biggl(
		-\ii \int_{t_{0}}^{t} \dd s\, H_{\mathrm{sys}}(\boldsymbol{\la}(s))
	\biggr)
	\hat{\rho} \,
	\overrightarrow{\cT} \!
	\exp \biggl(
		\ii \int_{t_{0}}^{t} \dd s\, H_{\mathrm{sys}}(\boldsymbol{\la}(s))
	\biggr)
\end{equation}
with time ordering $\overrightarrow{\cT}$ $(\overleftarrow{\cT})$ such that time increases (decreases) from left to right.
For operators $O_{m}$ ($m = 1, 2, \ldots$), define the response functions $R_{mn}(t, s)$ for $s > t_0$ by \cite{Kubo}
\begin{equation}
\label{R_mn_t_s}
R_{mn}(t, s)
=	\frac{\d}{\d \la_{n}(s)}
	\Bigl(
		\Tr \bigl[ O_{m} \hat{\rho}(t, \boldsymbol{\la}(\cdot)) \bigr]
	\Bigr) \Big|_{\boldsymbol{\la}(\cdot) = {\bf 0}}.
\end{equation}
Without loss of generality, we can set $t_{0} = -\io$.
Clearly, $R_{mn}(t, s) = 0$ for $s > t$, while
\begin{equation}
\label{R_mn_t_s_nonzero}
R_{mn}(t, s)
=	\Tr \Bigl[
		O_{m} \ee^{-\ii(t-s)H_{\mathrm{sys}}}
		[\ii V_{n}, \hat{\rho}] \ee^{\ii(t-s)H_{\mathrm{sys}}}
	\Bigr]
=	\Tr \Bigl[ O_{m}(t-s) [\ii V_{n}, \hat{\rho}] \Bigr]
\end{equation}
for $s \leq t$, where $O_{m}(t) = \ee^{\ii H_{\mathrm{sys}}t} O_{m} \ee^{-\ii H_{\mathrm{sys}}t}$.
Since $\bigl[ H_{\mathrm{sys}} - \la_{n} V_{n}, \ee^{-\b(H_{\mathrm{sys}} - \la_{n} V_{n})} \bigr] = 0$,
\begin{equation}
0
= \frac{\partial}{\partial\la_{n}}
		\Bigl(
			\Bigl[
				H_{\mathrm{sys}} - \la_{n} V_{n},
					\ee^{-\b(H_{\mathrm{sys}} - \la_{n} V_{n})}
			\Bigr]
		\Bigr)
		\Big|_{\la_{n} = 0}
= - \Bigl[ V_{n}, \ee^{-\b H_{\mathrm{sys}}} \Bigr]
		+ \biggl[
				H_{\mathrm{sys}},
					\frac{\partial}{\partial\la_{n}}
					\Bigl( \ee^{-\b(H_{\mathrm{sys}} - \la_{n} V_{n})} \Bigr) \Big|_{\la_{n} = 0}
			\biggr],
\end{equation}
where
\begin{equation}
\label{dl_ebHlV_l0}
\frac{\partial}{\partial\la_{n}}
	\Bigl( \ee^{-\b(H_{\mathrm{sys}} - \la_{n} V_{n})} \Bigr) \Big|_{\la_{n} = 0}
= \int_{0}^{\b} \dd \t\, \ee^{-\t H_{\mathrm{sys}}} V_{n} \ee^{(\t-\b)H_{\mathrm{sys}}}
= \int_{0}^{\b} \dd \t\, V_{n}(\ii\t) \ee^{-\b H_{\mathrm{sys}}},
\end{equation}
which means that
\begin{equation}
\bigl[ \ii V_{n}, \hat{\rho} \bigr]
= \int_{0}^{\b} \dd \t\, \ii \bigl[ H_{\mathrm{sys}}, V_{n}(\ii\t) \bigr] \hat{\rho}
= \int_{0}^{\b} \dd \t\, \partial\pdag_{\ii\t} V_{n}(\ii\t) \hat{\rho}.
\end{equation}
Inserting this into \eqref{R_mn_t_s_nonzero} yields
\begin{equation}
R_{mn}(t, s)
= \int_{0}^{\b} \dd \t\,
	\Tr \Bigl[ O_{m}(t-s) \partial\pdag_{\ii\t} V_{n}(\ii\t) \hat{\rho} \Bigr]
= R_{mn}(t-s, 0).
\end{equation}
In conclusion, defining $R_{mn}(t) = R_{mn}(t, 0)$, we have shown that
\begin{equation}
\label{R_mn_t}
R_{mn}(t)
= \int_{0}^{\b} \dd \t\,
	\bigl\langle O_{m}(t) \partial\pdag_{\ii\t} V_{n}(\ii\t) \bigr\rangle_{\b}
\end{equation}
with $\langle \cdots \rangle_{\b} = \Tr \bigl[ (\cdots) \hat{\rho} \bigr]$ and $\hat{\rho}$ defined above \eqref{rho_t_lambda}.


\subsection{Linear response from quench dynamics}
\label{App:Linear-response_theory:Linear_response_from_quench_dynamics}


Let
$\hat{\rho}_{\boldsymbol{\la}}
= \ee^{-\b(H_{\mathrm{sys}}-\sum_{n \geq 1} \la_{n} V_{n})}
	/ \Tr [ \ee^{-\b(H_{\mathrm{sys}}-\sum_{n \geq 1} \la_{n} V_{n})} ]$,
$\boldsymbol{\la} = (\la_{1}, \la_{2}, \ldots)$,
define the initial state and consider
\begin{equation}
\label{spp_expectation}
\bigl\langle O_{m}(t) \bigr\rangle_{\boldsymbol{\la}}
= \Tr \bigl[ O_{m}(t) \hat{\rho}_{\boldsymbol{\la}} \bigr]
\end{equation}
for operators $O_{m}(t) = \ee^{\ii H_{\mathrm{sys}}t} O_{m} \ee^{-\ii H_{\mathrm{sys}}t}$ evolving under $H_{\mathrm{sys}}$ for $t > 0$.
This is a quantum quench changing the Hamiltonian from
$H_{\mathrm{sys}} - \sum_{n \geq 1} \la_{n} V_{n}$ to $H_{\mathrm{sys}}$ at $t = 0$.
The aim is to show that the response functions for changes in $\la_{n}$ defined above can be expressed as
\begin{equation}
\label{R_mn_dynamic}
R_{mn}(t) 
= -	\frac{\partial}{\partial \la_{n}}
		\Bigl(
			\partial_{t} \bigl\langle O_{m}(t) \bigr\rangle_{\boldsymbol{\la}}
		\Bigr) \Big|_{\boldsymbol{\la} = {\bf 0}}.
\end{equation}
Indeed,
$\partial_{t} \bigl\langle O_{m}(t) \bigr\rangle_{\boldsymbol{\la}}
= \Tr \bigl[ \ii[H_{\mathrm{sys}}, O_{m}(t)] \hat{\rho}_{\boldsymbol{\la}} \bigr]$,
and thus,
\begin{align}
\frac{\partial}{\partial \la_{n}}
	\Bigl(
		\partial_{t} \langle O_{m}(t) \rangle_{\boldsymbol{\la}}
	\Bigr) \Big|_{\boldsymbol{\la} = {\bf 0}}
& = \frac{1}{\Tr \ee^{-\b H_{\mathrm{sys}}}}
		\Tr
		\biggl[
			\ii[H_{\mathrm{sys}}, O_{m}(t)]
			\frac{\partial}{\partial \la_{n}}
			\Bigl( \ee^{-\b(H_{\mathrm{sys}}-\la_{n} V_{n})} \Bigr) \Big|_{\la_{n} = 0}
		\biggr] \nonumber \\
& = \frac{1}{\Tr \ee^{-\b H_{\mathrm{sys}}}}
		\Tr
		\biggl[
			\ii[H_{\mathrm{sys}}, O_{m}(t)]
			\int_{0}^{\b} \dd \t\, V_{n}(\ii\t) \ee^{-\b H_{\mathrm{sys}}}
		\biggr] \nonumber \\
& = - \int_{0}^{\b} \dd \t\,
			\bigl\langle O_{m}(t) \partial\pdag_{\ii\t} V_{n}(\ii\t) \bigr\rangle_{\b}
	= - R_{mn}(t),
\end{align}
where we used \eqref{dl_ebHlV_l0} in the second equality and \eqref{R_mn_t} in the last.

We stress that \eqref{R_mn_dynamic} is a general result saying that the response functions $R_{mn}(t)$ obtained from the equilibrium correlation function in \eqref{R_mn_t} can equivalently be computed from the dynamics following a quantum quench.


\subsection{Conductivity matrix}
\label{App:Linear-response_theory:Conductivity_matrix}


In what follows the above is specialized to kink-like profiles as in Sect.~\ref{Sec:Applications:Conductivities}.
Let $\boldsymbol{\m} = (\m_{1}, \m_{2}, \ldots)$, where we recall that $\m_{1} = \b\m$ and $\m_{2} = -\b$.
We identify
$H_{\mathrm{sys}}$ with $Q_{2} - \sum_{n \neq 2} (\m_n/\b) Q_{n}$ using $Q_{2} = H$ in \eqref{H_iCFT} and $Q_{1} = Q$ in \eqref{Q_iCFT}, and we pick
$V_{n} = \int_{-L/2}^{L/2} \dd x\, W(x) \rho_{n}(x)$ and
$O_{m} = \int_{-L/2}^{L/2} \dd x\, j_{m}(x)$.
We recall that the latter are precisely the total currents and that $W(x)$ describes an overall kink-like profile such that $\lim_{x\to\mp\io} W(x) = \pm 1/2$ in the infinite volume.
A specific example of such a function $W(x)$ used in \cite{GLM} is
\begin{equation}
\label{W_x}
W(x) = W_0 \biggl( \frac{L}{2\pi} \sin \biggl( \frac{2\pi x}{L} \biggr) \biggr),
\quad
W_0(x) = - \frac{1}{2} \tanh \biggl( \frac{x}{\d_{W}} \biggr)
\end{equation}
for $\d_{W} > 0$, which satisfies our periodic boundary conditions for finite $L$.

The sign convention for $R_{mn}(t, s)$ in \eqref{R_mn_t_s} is so that an overall positive gradient [i.e., a negative $\la_{n}$ since $W(x)$ goes from $1/2$ to $-1/2$ in the infinite volume] corresponds to $R_{mn}(t, s)$ positive.
Since such a gradient induces a negative current, we define the conductivities as
\begin{equation}
\label{conductivity_def}
\ka_{mn}(t)
= \lim_{L\to\io} - \frac{1}{\beta} R_{mn}(t)
\end{equation}
in the thermodynamic limit.

\begin{proof}[Proof of \eqref{ka_mn_o_dynamic} and \eqref{ka_mn_o_GK}]
To prove the first equation, if we identify $\d\m_{n}$ with $\b \la_{n}$, it follows from \eqref{R_mn_dynamic} and \eqref{conductivity_def} that
\begin{equation}
\label{ka_mn_o_dynamic_appendix}
\ka_{mn}(t)
= \lim_{L\to\io} \frac{\partial}{\partial (\d\m_{n})}
	\Bigl(
		\partial_{t} \langle O_{m}(t) \rangle_{\boldsymbol{\la}}
	\Bigr) \Big|_{\boldsymbol{\la} = {\bf 0}}
= \lim_{L\to\io} \frac{\partial}{\partial (\d\m_{n})}
	\int_{-L/2}^{L/2} \dd x\,
	\Bigl(
		\partial_{t} \langle j_{m}(x; t) \rangle_{\mathrm{neq}}
	\Bigr) \Big|_{\boldsymbol{\d\m} = {\bf 0}}
\end{equation}
using the definition of the expectation in \eqref{spp_expectation}.
To prove the second, using the expression for $V_{n}$ above and the continuity equation
$\partial_t \rho_{n} + \partial_x j_{n} = 0$, we obtain
\begin{equation}
\partial\pdag_{\ii\t} V_{n}(\ii\t)
= - \int_{-L/2}^{L/2} \dd x\, W(x) \partial_x j_{n}(x, \ii\t).
\end{equation}
This together with \eqref{R_mn_t} and \eqref{conductivity_def} yields
\begin{align}
\label{ka_mn_o_GK_appendix}
\ka_{mn}(t)
& = \frac{1}{\b} \int_{0}^{\b} \dd \t
		\int_{-\io}^{\io} \dd x \int_{-\io}^{\io} \dd x'
		\bigl\langle j_{m}(x; t) W(x')
			\partial_{x'} j_{n}(x'; \ii\t) \bigr\rangle_{\b}^{\io} \nonumber \\
& = \frac{1}{\b} \int_{0}^{\b} \dd \t
		\int_{-\io}^{\io} \dd x'\, \partial_{x'} [-W(x')]
		\int_{-\io}^{\io} \dd x \,
		\bigl\langle j_{m}(x; t) j_{n}(x'; \ii\t) \bigr\rangle_{\b}^{c, \io}
\end{align}
with $\langle \cdots \rangle_{\b}$ defined below \eqref{R_mn_t}, where we used integration by parts, assumed that the current-current correlation functions decay rapidly for large separations, and used that the connected part is the only non-zero contribution since $\langle j_{m}(x; t) \rangle_{\b}^{\io} = 0$.

The results in \eqref{ka_mn_o_dynamic} and \eqref{ka_mn_o_GK} follow from the above by passing to the frequency domain using
$\ka_{mn}(\o)
= \int_{0}^{\io} \dd t\, \ee^{\ii \o t} \ka_{mn}(t)$,
noting that $\ka_{mn}(t) = 0$ for $t < 0$ (see Appendix~\ref{App:Linear-response_theory:Linear_response_in_closed_quantum_systems}), and changing notation from $\langle \cdots \rangle_{\b}$ to $\langle \cdots \rangle_{0}$ defined in \eqref{H_sCFT}.
\end{proof}

For the special case of a homogeneous system,
$\bigl\langle j_{m}(x; t) j_{n}(x'; \ii\t) \bigr\rangle_{\b}^{c, \io}$
depends only on $x-x'$ due to translation invariance.
Thus, by a change of variable, \eqref{ka_mn_o_GK_appendix} becomes
\begin{equation}
\label{ka_mn_t_homog}
\ka_{mn}(t)
= \frac{1}{\b} \int_{0}^{\b} \dd \t
	\int_{-\io}^{\io} \dd x\,
	\bigl\langle j_{m}(x; t) j_{n}(0; \ii\t) \bigr\rangle_{\b}^{c, \io},
\end{equation}
again with $\langle \cdots \rangle_{\b}$ defined below \eqref{R_mn_t}, since $\int_{-\io}^{\io} \dd x'\, \partial_{x'} [-W(x')] = W(-\io) - W(\io) = 1$.


\subsection{Alternative linear-response functions}
\label{App:Linear-response_theory:Alternative_linear-response_functions}


A property of the response functions defined in \eqref{R_mn_t_s} is that they will depend on the details of the perturbations if the system is inhomogeneous.
One way to circumvent this is to define alternative linear-response functions as follows.

In analogy with Appendix~\ref{App:Linear-response_theory:Linear_response_in_closed_quantum_systems},  introduce
$H_{\mathrm{sys}}(\boldsymbol{\la}(\cdot, s))
= H_{\mathrm{sys}} - \sum_{n \geq 1} \int_{-L/2}^{L/2} \dd x\, \la_{n}(x, s) \rho_{n}(x)$,
$\boldsymbol{\la}(x, s) = (\la_{1}(x, s), \la_{2}(x, s), \ldots)$, where $\la_{n}(x, s)$
are functions of position $x$ and time $s$ satisfying $\la_{n}(x + L, s) = \la_{n}(x, s)$ and $\rho_{n}(x)$ are densities corresponding to the conserved quantities $Q_{n} = \int \dd x\, \rho_{n}(x)$ ($n = 1, 2, \ldots$).
Instead of \eqref{rho_t_lambda}, consider the time-evolved state
\begin{equation}
\label{rho_t_lambda_x}
\hat{\rho}(t, \boldsymbol{\la}(\cdot, \cdot))
= \overleftarrow{\cT} \!
	\exp \biggl(
		-\ii \int_{t_{0}}^{t} \dd s\, H_{\mathrm{sys}}(\boldsymbol{\la}(\cdot, s))
	\biggr)
	\hat{\rho} \,
	\overrightarrow{\cT} \!
	\exp \biggl(
		\ii \int_{t_{0}}^{t} \dd s\, H_{\mathrm{sys}}(\boldsymbol{\la}(\cdot, s))
	\biggr).
\end{equation}
For operators $\cO_{m}(x)$ ($m = 1, 2, \ldots$), define the response functions
\begin{equation}
\label{R_mn_x_t_xp_s}
R_{mn}(x, t; x', s)
=	\frac{\d}{\d \la_{n}(x', s)}
	\Bigl(
		\Tr \bigl[ \cO_{m}(x) \hat{\rho}(t, \boldsymbol{\la}(\cdot, \cdot)) \bigr]
	\Bigr) \Big|_{\boldsymbol{\la}(\cdot, \cdot) = {\bf 0}}
\end{equation}
for $s > t_{0}$.
Again, without loss of generality, set $t_{0} = -\io$.
Repeating the steps in Appendix~\ref{App:Linear-response_theory:Linear_response_in_closed_quantum_systems} yields $R_{mn}(x, t; x', s) = R_{mn}(x, t - s; x', 0)$ and, defining $R_{mn}(x, t; x') = R_{mn}(x, t; x', 0)$, we get
\begin{equation}
\label{R_mn_x_t_xp}
R_{mn}(x, t; x')
= \int_{0}^{\b} \dd \t\,
	\bigl\langle \cO_{m}(x; t) \partial\pdag_{\ii\t} \rho_{n}(x'; \ii\t) \bigr\rangle_{\b}
\end{equation}
with $\cO_{m}(x; t) = \ee^{\ii H_{\mathrm{sys}} t} \cO_{m}(x) \ee^{-\ii H_{\mathrm{sys}} t}$ and $\langle \cdots \rangle_{\b}$ defined below \eqref{R_mn_t}.

Similar to Appendix~\ref{App:Linear-response_theory:Conductivity_matrix}, in this case picking $\cO_{m}(x) = j_{m}(x)$, we can define the conductivities $\ka_{mn}(x, t; x')$ by
\begin{equation}
\label{conductivity_def_x_t_xp}
\partial_{x'} \ka_{mn}(x, t; x')
= \lim_{L\to\io} - \frac{1}{\beta} R_{mn}(x, t; x')
\end{equation}
in the thermodynamic limit.
Repeating the steps in the proof in Appendix~\ref{App:Linear-response_theory:Conductivity_matrix}, assuming that the current-current correlation functions decay rapidly for large separations, yields
\begin{equation}
\label{ka_mn_x_t_xp_GK}
\ka_{mn}(x, t; x')
= \frac{1}{\b} \int_{0}^{\b} \dd \t
	\langle j_{m}(x; t) j_{n}(x'; \ii\t) \rangle_{0}^{c, \io},
\end{equation}
from which \eqref{ka_mn_p_o_xp_GK} follows.

For the special case of a homogeneous system, we obtain
$\ka_{mn}(x, t; x') = \ka_{mn}(x - x', t; 0)$, which yields the result in \eqref{ka_mn_t_homog} if we define $\ka_{mn}(t) = \int_{-\io}^{\io} \dd x\, \ka_{mn}(x, t; 0)$.


\section{Computational details}
\label{App:Computational_details}


In this appendix, we give computational details for the results in Sect.~\ref{Sec:Applications}.

For later reference, we collect formulas for equilibrium expectations in the thermodynamic limit of the components of the energy-momentum tensor and the $\mathrm{U}(1)$ current:
\begin{subequations}
\label{Tpm_v0b0m0_expectations}
\begin{align}
\langle T_{\pm}(x) \rangle_{0}^{\io}
& = \frac{\pi c}{12 (v_{0}\b_{0})^2} + \frac{\ka\m_{0}^2}{4\pi v_{0}^2},
		\label{Tpm_v0b0m0_expectation} \\
\langle T_{\pm}(x_{1})T_{\pm}(x_{2}) \rangle_{0}^{\io}
& = \biggl( \frac{\pi c}{12 (v_{0}\b_{0})^2} + \frac{\ka\m_{0}^2}{4\pi v_{0}^2} \biggr)^2
		+ \frac{\pi^2 c}{8(v_{0}\b_{0})^4 \sinh^4 \bigl( \pi[x_{1}-x_{2}]/v_{0}\b_{0} \bigr)}
		\nonumber \\
& \quad	
		+ \frac{-\ka\m_{0}^2/v_{0}^2}
				{4(v_{0}\b_{0})^2 \sinh^2 \bigl( \pi[x_{1}-x_{2}]/v_{0}\b_{0} \bigr)},
		\label{Tpm_Tpm_v0b0m0_expectation} \\
\langle T_{\pm}(x_{1})T_{\mp}(x_{2}) \rangle_{0}^{\io}
& = \biggl( \frac{\pi c}{12 (v_{0}\b_{0})^2} + \frac{\ka\m_{0}^2}{4\pi v_{0}^2} \biggr)^2
		\label{Tpm_Tmp_v0b0m0_expectation}
\end{align}
\end{subequations}
and
\begin{subequations}
\label{Jpm_v0b0m0_expectations}
\begin{align}
\langle J_{\pm}(x) \rangle_{0}^{\io}
& = \frac{\ka\m_{0}}{2\pi v_{0}},
		\label{Jpm_v0b0m0_expectation} \\
\langle J_{\pm}(x_{1})J_{\pm}(x_{2}) \rangle_{0}^{\io}
& = \biggl( \frac{\ka\m_{0}}{2\pi v_{0}} \biggr)^2
		+ \frac{-\ka}{4(v_{0}\b_{0})^2 \sinh^2 \bigl( \pi[x_{1}-x_{2}]/v_{0}\b_{0} \bigr)},
		\label{Jpm_Jpm_v0b0m0_expectation} \\
\langle J_{\pm}(x_{1})J_{\mp}(x_{2}) \rangle_{0}^{\io}
& = \biggl( \frac{\ka\m_{0}}{2\pi v_{0}} \biggr)^2
		\label{Jpm_Jmp_v0b0m0_expectation}
\end{align}
\end{subequations}
as well as
\begin{subequations}
\label{Tpm_Jpm_v0b0m0_expectations}
\begin{align}
\langle T_{\pm}(x_{1})J_{\pm}(x_{2}) \rangle_{0}^{\io}
& = \langle J_{\pm}(x_{1})T_{\pm}(x_{2}) \rangle_{0}^{\io} \nonumber \\
& = \biggl( \frac{\pi c}{12 (v_{0}\b_{0})^2} + \frac{\ka\m_{0}^2}{4\pi v_{0}^2} \biggr) \frac{\ka\m_{0}}{2\pi v_{0}}
		+ \frac{-\ka\m_{0}/v_{0}}{4(v_{0}\b_{0})^2 \sinh^2 \bigl( \pi[x_{1}-x_{2}]/v_{0}\b_{0} \bigr)},
		\label{Tpm_Jpm_v0b0m0_expectation} \\
\langle T_{\pm}(x_{1})J_{\mp}(x_{2}) \rangle_{0}^{\io}
& = \langle J_{\pm}(x_{1})T_{\mp}(x_{2}) \rangle_{0}^{\io}  \nonumber \\
& = \biggl( \frac{\pi c}{12 (v_{0}\b_{0})^2} + \frac{\ka\m_{0}^2}{4\pi v_{0}^2} \biggr) \frac{\ka\m_{0}}{2\pi v_{0}},
		\label{Tpm_Jmp_v0b0m0_expectation}
\end{align}
\end{subequations}
cf., e.g., Sections~3.3,~4.1, and~4.3 in \cite{GLM}.
We note that translation invariance is manifest and that \eqref{Tpm_v0b0m0_expectations}\textnormal{--}\eqref{Tpm_Jpm_v0b0m0_expectations} only hold true in the thermodynamic limit.

Following \cite{GLM}, the formulas in \eqref{Tpm_v0b0m0_expectations}\textnormal{--}\eqref{Tpm_Jpm_v0b0m0_expectations} for $\m_{0} = 0$ can be obtained in a simple way for any modular-invariant CFT.
In this case, the reason why these hold true only in the limit $L\to\io$ is that then the only contributions are from the vacuum expectation in the dual representation on the circle with circumference $v_{0}\b_{0}$, which is universal since it depends only on the two vacuum highest-weight representations of the Virasoro algebra.
If we did not take $L\to\io$, there would be contributions from the other Verma modules that depend on the representation content of the CFT, i.e., all eigenstates $|h_+,h_-\rangle$ of $L^{\pm}_0$ where $L^{\pm}_0|h_+,h_-\rangle = h_{\pm}|h_+,h_-\rangle$ and not only the vacuum $|0\rangle$ corresponding to $h_+ = h_- = 0$, see, e.g., \cite{FMS}.
Lastly, we mention that the formulas for $\m_{0} \neq 0$ can be obtained by large gauge transformations.%
\footnote{%
Strictly speaking, the possible values of $\m_{0}$ are constrained by that $L\m_{0} / 2\pi v_{0}$ must be an integer, but this is of no consequence in the infinite volume.%
}


\subsection{Proof of \texorpdfstring{\eqref{cEcJ_rj_iCFT_neq}}{}}
\label{App:Computational_details:cEcJ_rj_iCFT_neq}


It follows from Proposition~\ref{Proposition:Main} that
\begin{align}
\langle T_{\pm}(x; t) \rangle_{\mathrm{neq}}
& = 	\biggl( \frac{v_{0}\b_{0}}{v(x)\b(\tilde{x}^{\mp})} \biggr)^2
			\langle T_{\pm}(g(\tilde{x}^{\mp})) \rangle_{0}
		+ \frac{v_{0}\b_{0}
					[\m(\tilde{x}^{\mp})\b(\tilde{x}^{\mp}) - \m_{0}\b_{0}]}
				{[v(x)\b(\tilde{x}^{\mp})]^2}
			\langle  J_{\pm}(g(\tilde{x}^{\mp})) \rangle_{0} \nonumber \\
& \quad
		+ \frac{\ka}{4\pi}
			\biggl(
				\frac{\m(\tilde{x}^{\mp})\b(\tilde{x}^{\mp}) - \m_{0}\b_{0}}
					{v(x)\b(\tilde{x}^{\mp})} \biggr)^2
		- \frac{\cT(\tilde{x}^{\mp}) + \cS(x)}{2 v(x)^2},
\end{align}
which, using \eqref{Tpm_v0b0m0_expectation} and \eqref{Jpm_v0b0m0_expectation}, implies
\begin{equation}
\langle T_{\pm}(x; t) \rangle_{\mathrm{neq}}^{\io}
= \frac{\pi c}{12 [v(x)\b(\tilde{x}^{\mp})]^2}
	+ \frac{\ka \m(\tilde{x}^{\mp})^2}{4\pi v(x)^2}
	- \frac{\cT(\tilde{x}^{\mp}) + \cS(x)}{2 v(x)^2}
\end{equation}
in the thermodynamic limit.
This together with \eqref{H_iCFT} and \eqref{cJ_iCFT} yields \eqref{cEcJ_iCFT_neq} with $F(x)$ in \eqref{F_x_G_x_iCFT_neq}.
Similarly, it follows from Proposition~\ref{Proposition:Main} that
\begin{equation}
\langle J_{\pm}(x; t) \rangle_{\mathrm{neq}}
= \frac{v_{0}\b_{0}}{v(x)\b(\tilde{x}^{\mp})}
	\langle J_{\pm}(g(\tilde{x}^{\mp})) \rangle_{0}
	+ \frac{\ka}{2\pi} \frac{\m(\tilde{x}^{\mp})\b(\tilde{x}^{\mp}) - \m_{0}\b_{0}}{v(x)\b(\tilde{x}^{\mp})},
\end{equation}
which, using \eqref{Jpm_v0b0m0_expectation}, implies
\begin{equation}
\langle J_{\pm}(x; t) \rangle_{\mathrm{neq}}^{\io}
= \frac{\ka \m(\tilde{x}^{\mp})}{2\pi v(x)}.
\end{equation}
This together with \eqref{Q_iCFT} and \eqref{j_iCFT} yields \eqref{rj_iCFT_neq} with $G(x)$ in \eqref{F_x_G_x_iCFT_neq}.


\subsection{Proofs of \texorpdfstring{\eqref{cJ_j_combos_iCFT_neq}}{} and \texorpdfstring{\eqref{psipsi_iLLM_neq}}{}}
\label{App:Computational_details:cJ_j_combos_iCFT_neq_and_psipsi_iLLM_neq}


The results in \eqref{cJ_j_combos_iCFT_neq} follow by straightforward but tedious computations using Proposition~\ref{Proposition:Main}, \eqref{cJ_iCFT}, \eqref{j_iCFT}, and \eqref{Tpm_v0b0m0_expectations}\textnormal{--}\eqref{Tpm_Jpm_v0b0m0_expectations}.
Similarly, the result in \eqref{psipsi_iLLM_neq} also follows straightforwardly from Proposition~\ref{Proposition:Main} and the equilibrium expectation for the renormalized fermionic fields in the usual (homogeneous) local Luttinger model:%
\begin{multline}
\langle \psi^+_{r}(x^-_{1},x^+_{1}) \psi^-_{r'}(x^-_{2},x^+_{2}) \rangle_{0}^{\io}
= \d_{r, r'}
	\frac{1}{2\pi\tilde{\ell}} \,
	\ee^{
		  \ii \m_{0}[
		  		\tau^+_{\psi^-_r} (x^-_{1} - x^-_{2})
				- \tau^-_{\psi^-_r} (x^+_{1} - x^+_{2})
			]/v_{0}
	} \\
\times
	\left(
		\frac{
			\ii \pi\tilde{\ell}
		}{
			v_{0}\b_{0}
			\sinh( \pi[x^-_{1} - x^-_{2}]/v_{0}\b_{0} )
		}
	\right)^{2\D^+_{\psi_r}}
	\left( 
		\frac{
			-\ii \pi\tilde{\ell}
		}{
			v_{0}\b_{0}
			\sinh( \pi[x^+_{1} - x^+_{2}]/v_{0}\b_{0} )
		}
	\right)^{2\D^-_{\psi_r}},
\end{multline}
where $x_{1,2}^{\pm} = x_{1,2} \pm v_{0} t_{1,2}$ and $\tilde{\ell}$ is a length parameter introduced in the multiplicative renormalization of the fields, cf., e.g., \cite{LLMM2, LaMo1}.
As is commonplace, we set $\tilde{\ell} = 1$.


\subsection{Proof of \texorpdfstring{\eqref{D_mn_ka_mn_o_reg_iCFT}}{} starting from \texorpdfstring{\eqref{ka_mn_o_dynamic}}{}}
\label{App:Computational_details:ka_mn_o_dynamic_iCFT}


It follows from \eqref{cEcJ_rj_iCFT_neq} that
$\langle j_{1}(x; t) \rangle_{\mathrm{neq}}^{\io}
= \sum_{r=\pm} r G(\tilde{x}^{-r})/2$
and
$\langle j_{2}(x; t) \rangle_{\mathrm{neq}}^{\io}
= \sum_{r=\pm} r F(\tilde{x}^{-r})/2$
with
\begin{equation}
G(x)
= - \frac{\ka \m_{1}(x)}{\pi \m_{2}(x)},
\quad
F(x)
= \frac{\pi^2 c + 3\ka \m_{1}(x)^2}{6\pi \m_{2}(x)^2}
	+ \frac{c v(x)^2}{12\pi}
		\Biggl[
			\frac{\m_{2}''(x)}{\m_{2}(x)}
			- \frac{1}{2} \biggl( \frac{\m_{2}'(x)}{\m_{2}(x)} \biggr)^2
			+ \frac{v'(x)}{v(x)} \frac{\m_{2}'(x)}{\m_{2}(x)}
		\Biggr].
\end{equation}
Since
\begin{subequations}
\begin{align}
& \frac{\partial}{\partial(\d\m_{1})} G(x) \bigg|_{\boldsymbol{\d\m} = 0}
	= - \frac{\ka}{\pi \m_{2}} W(x), \\
& \frac{\partial}{\partial(\d\m_{2})} G(x) \bigg|_{\boldsymbol{\d\m} = 0}
	= \frac{\partial}{\partial(\d\m_{1})} F(x) \bigg|_{\boldsymbol{\d\m} = 0}
	= \frac{\ka \m_{1}}{\pi \m_{2}^2} W(x), \\
& \frac{\partial}{\partial(\d\m_{2})} F(x) \bigg|_{\boldsymbol{\d\m} = 0}
	= - \frac{\pi^2 c + 3\ka \m_{1}^2}{3\pi \m_{2}^3} W(x)
		+ \frac{c}{12\pi \m_{2}} v(x) \partial_{x} \bigl[ v(x) \partial_{x} W(x) \bigr]
\end{align}
\end{subequations}
and since \eqref{dxpm_dx} implies
$\partial_{t} u(\tilde{x}^{-r})
= -r v(x) \partial_{x} u(\tilde{x}^{-r})
= -r v(\tilde{x}^{-r}) \partial_{\tilde{x}^{-r}} u(\tilde{x}^{-r})$ for any differentiable function $u(x)$, we obtain
\begin{subequations}
\begin{align}
& \frac{\partial}{\partial(\d\m_{1})}
	\partial_{t} \langle j_{1}(x; t) \rangle_{\mathrm{neq}}^{\io}
	\bigg|_{\boldsymbol{\d\m} = 0}
	= \frac{\ka}{2\pi \m_{2}}
		\sum_{r=\pm}
		v(\tilde{x}^{-r}) \partial_{\tilde{x}^{-r}} W(\tilde{x}^{-r}), \\
& \frac{\partial}{\partial(\d\m_{2})}
	\partial_{t}\langle j_{1}(x; t) \rangle_{\mathrm{neq}}^{\io}
	\bigg|_{\boldsymbol{\d\m} = 0}
	= \frac{\partial}{\partial(\d\m_{1})}
		\partial_{t} \langle j_{2}(x; t) \rangle_{\mathrm{neq}}^{\io}
		\bigg|_{\boldsymbol{\d\m} = 0} 
	= - \frac{\ka \m_{1}}{2\pi \m_{2}^2}
		\sum_{r=\pm}
		v(\tilde{x}^{-r}) \partial_{\tilde{x}^{-r}} W(\tilde{x}^{-r}), \\
& \frac{\partial}{\partial(\d\m_{2})}
	\partial_{t} \langle j_{2}(x; t) \rangle_{\mathrm{neq}}^{\io}
	\bigg|_{\boldsymbol{\d\m} = 0}
	= \frac{\ka \m_{1}^2}{2\pi \m_{2}^3}
		\sum_{r=\pm} 
		v(\tilde{x}^{-r}) \partial_{\tilde{x}^{-r}}
		W(\tilde{x}^{-r}) \nonumber \\
& \qquad \qquad \qquad \;\;
	+ \frac{\pi c}{6 \m_{2}^3}
		\sum_{r=\pm} 
		v(\tilde{x}^{-r}) \partial_{\tilde{x}^{-r}}
		\biggl[
			W(\tilde{x}^{-r})
			- \Bigl( \frac{\m_{2}}{2\pi} \Bigr)^2
				v(\tilde{x}^{-r}) \partial_{\tilde{x}^{-r}}
				\bigl[
					v(\tilde{x}^{-r}) \partial_{\tilde{x}^{-r}} W(\tilde{x}^{-r})
				\bigr]
		\biggr].
\end{align}
\end{subequations}
Inserting these into \eqref{ka_mn_o_dynamic} followed by a change of variable to
$y = f(\tilde{x}^{-r})$ with $v_{0}$ replaced by $v$ and using
$\m_{1} = \b\m$ and $\m_{2} = -\b$
yields
\begin{equation}
\label{ka_mn_o_I_1_I_2}
\ka_{11}(\o)
	= \frac{\ka}{2\pi \b} I_{1}(\o),
\quad
\ka_{12}(\o)
	= \ka_{21}(\o)
	= \frac{\ka \m}{2\pi \b} I_{1}(\o),
\quad
\ka_{22}(\o)
	= \frac{\pi c}{6 \b^3} I_{2}(\o) + \frac{\ka \m^2}{2\pi \b} I_{1}(\o),
\end{equation}
where
\begin{subequations}
\label{I_1_I_2_pos_space}
\begin{align}
I_{1}(\o)
& = \sum_{r=\pm}
		\int_{0}^{\io} \dd t\, \ee^{\ii\o t}
		\int_{-\io}^{\io} \dd y\,
		v( f^{-1}(y + rvt) )
		\partial_{y} \bigl[ - W(f^{-1}(y)) \bigr], \\
I_{2}(\o)
& = \sum_{r=\pm}
		\int_{0}^{\io} \dd t\, \ee^{\ii\o t}
		\int_{-\io}^{\io} \dd y\,
		v( f^{-1}(y + rvt) )
		\left[ 1 - \left( \frac{v\b \partial_{y}}{2\pi} \right)^2 \right]
		\partial_{y} \left[ - W(f^{-1}(y)) \right].
\end{align}
\end{subequations}
One can show that the latter can be rewritten using
\begin{equation}
\label{kTh}
k(y)
= \int_{-\io}^{\io} \dd y'\,
	\frac{v( f^{-1}(y+y') )}{v}
	\partial_{y'} \bigl[ -W(f^{-1}(y')) \bigr],
\quad
\hat{k}(p) = \int_{-\io}^{\io} \dd y\, k(y) \ee^{-\ii py}
\end{equation}
as follows
\begin{equation}
\label{I_1_I_2_mom_space}
I_{1}(\o)
= \sum_{r=\pm}
	\int_{-\io}^{\io} \frac{\dd p}{2\pi}
	\frac{\ii v \hat{k}(p)}{\o + rvp + \ii0^+},
\quad
I_{2}(\o)
= \sum_{r=\pm}
	\int_{-\io}^{\io} \frac{\dd p}{2\pi}
	\left[ 1 + \left( \frac{v\b p}{2\pi} \right)^2 \right]
	\frac{\ii v \hat{k}(p)}{\o + rvp + \ii0^+}.
\end{equation}
To see this, note that one can write $\hat{k}(p) = \hat{k}_{1}(p) \hat{k}_{2}(-p)$ with $\hat{k}_{1}(p)$ and $\hat{k}_{2}(p)$ corresponding to $k_{1}(y) = {v( f^{-1}(y) )}/{v}$ and $k_{2}(y) = \partial_{y} \bigl[ -W(f^{-1}(y)) \bigr]$, respectively, which means that, e.g.,
$\int \dd y\, k_{1}(y + rvt) [1 - (v\b \partial_{y}/2\pi)^2] k_{2}(y)
= (2\pi)^{-1} \int \dd p\, \hat{k}_{1}(p) \hat{k}_{2}(-p) [1 + (v\b p/2\pi)^2] \ee^{\ii rvp t}$,
and use
\begin{equation}
\label{time_integral}
\int_{0}^{\io} \dd t\, \ee^{\ii(\o + rvp)t} = \frac{\ii}{\o + rvp + \ii0^+}.
\end{equation}
In Appendix~\ref{Appendix:Computational_details:I_1_I_2_results}, we show that
\begin{equation}
\label{I_1_I_2_results}
\Re I_{1}(\o) = 2\pi v \d(\o) + I(\o),
\quad
\Re I_{2}(\o) = 2\pi v \d(\o) + \left[ 1 + \left( \frac{\o\b}{2\pi} \right)^2 \right] I(\o)
\end{equation}
with $I(\o)$ in \eqref{I_omega}, which together with \eqref{ka_mn_o_I_1_I_2} completes the proof.


\subsection{Proof of \texorpdfstring{\eqref{D_mn_ka_mn_o_reg_iCFT}}{} starting from \texorpdfstring{\eqref{ka_mn_o_GK}}{}}
\label{App:Computational_details:ka_mn_o_GK_iCFT}


It suffices to derive \eqref{ka_mn_o_I_1_I_2} together with \eqref{I_1_I_2_mom_space}.
To do so, we will need the following integrals:%
\begin{subequations}
\label{F_a_b}
\begin{align}
\int_{-\io}^{\io} \dd \xi\, \frac{\ee^{\ii b \xi}}{\sinh^4(\xi + ia)}
& = \frac{\pi(b^3+4b)}{3} \frac{\ee^{b[a]_{\pi}}}{\ee^{b\pi}-1},
		\label{F_th_a_b} \\
\int_{-\io}^{\io} \dd \xi\, \frac{\ee^{\ii b \xi}}{\sinh^2(\xi + ia)}
& = -2\pi b \frac{\ee^{b[a]_{\pi}}}{\ee^{b\pi}-1}
		\label{F_el_a_b}
\end{align}
\end{subequations}
for all $a, b \in \mathbb{R}$, where $[a]_{\pi} \in [0, \pi)$ is defined by
$a = n_0 \pi + [a]_{\pi}$ for $n_0 \in \mathbb{Z}$.
(These can be proven using the residue theorem.)
We will also need the equilibrium current-current correlation functions.
Setting $\b(x) = \b$ and $\m(x) = \m$ in \eqref{cJ_j_combos_iCFT_neq}, we obtain
\begin{subequations}
\label{cJ_j_combos_iCFT_eq}
\begin{align}
\langle j_{2}(x_{1}; t_{1}) j_{2}(x_{2}; t_{2}) \rangle_{0}^{c, \io}
& = \sum_{r=\pm}
		\Biggl[
			\frac{\pi^2 c}
				{8\b^4
					\sinh^4
					\bigl(
						\pi[f(x_{1}) - f(x_{2}) - r v (t_{1} - t_{2})]/v\b
					\bigr)} \nonumber \\
& \qquad\quad +
			\frac{-\ka\m^2}
					{4\b^2
						\sinh^2
						\bigl(
							\pi[f(x_{1}) - f(x_{2}) - r v (t_{1} - t_{2})]/v\b
						\bigr)}
		\Biggr],
		\label{cJcJ_iCFT_eq} \\
\langle j_{1}(x_{1}; t_{1}) j_{1}(x_{2}; t_{2}) \rangle_{0}^{c, \io}
& = \sum_{r=\pm}
		\frac{-\ka}
			{4\b^2
				\sinh^2
				\bigl(
					\pi[f(x_{1}) - f(x_{2}) - r v (t_{1} - t_{2})]/v\b
				\bigr)},
		\label{jj_iCFT_eq} \\
\langle j_{2}(x_{1}; t_{1}) j_{1}(x_{2}; t_{2}) \rangle_{0}^{c, \io}
& = \langle j_{1}(x_{1}; t_{1}) j_{2}(x_{2}; t_{2}) \rangle_{0}^{c, \io}
	= \m \langle j(x_{1}; t_{1}) j(x_{2}; t_{2}) \rangle_{0}^{c, \io}
		\label{jcJ_cJe_iCFT_eq}
\end{align}
\end{subequations}
for $f(x)$ with $v_{0}$ replaced by $v$, where we used that $f(\tilde{x}^{-r}_{j}) = f(x_{j}) - r v t_{j}$.

Consider first $\ka_{22}(\o)$.
By inserting \eqref{cJcJ_iCFT_eq} into \eqref{ka_mn_o_GK}, changing variables to $y = f(x)$ and $y' = f(x')$, and shifting $y$ to $y + y'$, we obtain
\begin{align}
\ka_{22}(\o)
& = \frac{\pi^2 c}{8\b^5}
		\sum_{r=\pm}
		\int_{0}^{\b} \dd \t
		\int_{0}^{\io} \dd t\, \ee^{\ii\o t}
		\int_{-\io}^{\io} \dd y\,
		\frac{k(y)}{\sinh^4 \Bigl( \pi [ y - rv(t-\ii\t) ]/v\b \Bigr)} \nonumber \\
& \quad
	- \frac{\ka \m^2}{4\b^3}
		\sum_{r=\pm}
		\int_{0}^{\b} \dd \t
		\int_{0}^{\io} \dd t\, \ee^{\ii\o t}
		\int_{-\io}^{\io} \dd y\,
		\frac{k(y)}{\sinh^2 \Bigl( \pi [ y - rv(t-\ii\t) ]/v\b \Bigr)}
	\label{ka_22_GK_step_1}
\end{align}
with $k(y)$ in \eqref{kTh}.
Inserting
$k(y) = (2\pi)^{-1} \int_{-\io}^{\io} \dd p\, \hat{k}(p) \ee^{\ii py}$
into \eqref{ka_22_GK_step_1} followed by a change of variable to $\xi = \pi(y - rvt)/v\b$ gives
\begin{align}
\ka_{22}(\o)
& = \frac{\pi vc}{8\b^4}
		\sum_{r=\pm}
		\int_{-\io}^{\io} \frac{\dd p}{2\pi}
		\hat{k}(p)
		\int_{0}^{\b} \dd \t \int_{0}^{\io} \dd t\, \ee^{\ii(\o + rvp)t}
		\int_{-\io}^{\io} \dd \xi\,
		\frac{\ee^{\ii pv\b\xi/\pi}}{\sinh^4( \xi + \ii r\pi\t/\b)} \nonumber \\
& \quad
	- \frac{v \ka \mu^2}{4\pi\b^2}
		\sum_{r=\pm}
		\int_{-\io}^{\io} \frac{\dd p}{2\pi}
		\hat{k}(p)
		\int_{0}^{\b} \dd \t \int_{0}^{\io} \dd t\, \ee^{\ii(\o + rvp)t}
		\int_{-\io}^{\io} \dd \xi\,
		\frac{\ee^{\ii pv\b\xi/\pi}}{\sinh^2( \xi + \ii r\pi\t/\b)}.
	\label{ka_22_GK_step_2}
\end{align}
The $\xi$-integrals are of the form in \eqref{F_a_b}.
Using these formulas with $a = r\pi \t/\b$ and $b = v\b p/\pi$, computing the $\t$-integral by treating the cases $r = \pm$ separately, and computing the $t$-integral using \eqref{time_integral}, it follows that \eqref{ka_22_GK_step_2} yields $\ka_{22}(\o)$ in \eqref{ka_mn_o_I_1_I_2} with $I_{1}(\o)$ and $I_{2}(\o)$ in \eqref{I_1_I_2_mom_space}.
The corresponding results for the remaining $\ka_{mn}(\o)$ follow analogously.


\subsection{Proof of \texorpdfstring{\eqref{I_1_I_2_results}}{}}
\label{Appendix:Computational_details:I_1_I_2_results}


To prove \eqref{I_1_I_2_results}, we first note that for $v(x) = v$, \eqref{kTh} implies $k(y) = W(-\io) - W(\io) = 1$, meaning that $\hat{k}(p) = 2\pi \d(p)$.
Thus, in this case, \eqref{I_1_I_2_mom_space} gives
\begin{equation}
I_{1}(\o)
= I_{2}(\o) 
= \frac{2 \ii v}{\o + \ii0^+}
= 2v \bigl[ \pi \d(\o)  + \ii \cP(1/\o) \bigr],
\end{equation}
where $\cP$ denotes the principal value.
This corresponds to the Drude peaks which are the sole contributions to the conductivities in standard CFT.

For inhomogeneous CFT, it thus remains to consider the remainders obtained by subtracting the above from \eqref{I_1_I_2_mom_space}.
I.e.,
\begin{subequations}
\label{Delta_I_1_I_2}
\begin{align}
\D I_{1}(\o)
& =	I_{1}(\o) - \frac{2 \ii v}{\o + \ii0^+}
	= \sum_{r=\pm}
		\int_{-\io}^{\io} \frac{\dd p}{2\pi}
		\frac{\ii v\D \hat{k}(p)}{\o + rvp + \ii0^+}, \\
\D I_{2}(\o)
& =	I_{2}(\o) - \frac{2 \ii v}{\o + \ii0^+}
	= \sum_{r=\pm}
		\int_{-\io}^{\io} \frac{\dd p}{2\pi}
		\left[ 1 + \left( \frac{v\b p}{2\pi} \right)^2 \right]
		\frac{\ii v\D \hat{k}(p)}{\o + rvp + \ii0^+},
\end{align}
\end{subequations}
where
\begin{equation}
\label{Delta_kTh}
\D \hat{k}(p)
= \hat{k}(p) - 2\pi \d(p)
= \int_{-\io}^{\io} \dd y \int_{-\io}^{\io} \dd y'\, K(y, y') \ee^{-\ii p(y-y')}
\end{equation}
with
$K(y, y')
= [v( f^{-1}(y) )/v] \partial_{y'} \bigl[ -W(f^{-1}(y')) \bigr]
	- \partial_{y'} \bigl[ -W(y') \bigr]$ for $f(x)$ with $v_{0}$ replaced by $v$.
Changing order of the integrals in \eqref{Delta_I_1_I_2} and \eqref{Delta_kTh}, which is possible by Fubini's theorem if $1 - v/v(x) \in L^{1}(\mathbb{R})$, and using
\begin{subequations}
\begin{align}
\sum_{r=\pm}
\int_{-\io}^{\io} \frac{\dd p}{2\pi}
\frac{\ii r/v}{p + r(\o + \ii0^+)/v}
\ee^{-\ii p(y-y')}
& = \frac{1}{v}
		\ee^{\ii\o|y-y'|/v}, \\
\sum_{r=\pm}
\int_{-\io}^{\io} \frac{\dd p}{2\pi}
\left[ 1 + \left( \frac{v\b p}{2\pi} \right)^2 \right]
\frac{\ii r/v}{p + r(\o + \ii0^+)/v}
\ee^{-\ii p(y-y')}
& = \frac{1}{v}
		\left[ 1 + \left( \frac{\o\b}{2\pi} \right)^2 \right]
		\ee^{\ii\o|y-y'|/v},
\end{align}
\end{subequations}
which follow from the residue theorem by dividing into the cases $y - y' > 0$ and $< 0$, we obtain%
\begin{subequations}
\begin{align}
I_{1}(\o)
& = \frac{2 \ii v}{\o + \ii0^+}
		+ \int_{-\io}^{\io} \dd y \int_{-\io}^{\io} \dd y'\, K(y, y')
			\ee^{\ii\o|y-y'|/v}, \\
I_{2}(\o)
& = \frac{2 \ii v}{\o + \ii0^+}
		+ \left[ 1 + \left( \frac{\o\b}{2\pi} \right)^2 \right]
			\int_{-\io}^{\io} \dd y \int_{-\io}^{\io} \dd y'\, K(y, y')
			\ee^{\ii\o|y-y'|/v}.
\end{align}
\end{subequations}
From this, \eqref{I_1_I_2_results} follows if we show that
\begin{equation}
\label{I_omega_identification}
I(\o)
= \Re \int_{-\io}^{\io} \dd y \int_{-\io}^{\io} \dd y'\, K(y, y') \ee^{\ii\o|y-y'|/v}.
\end{equation}
To prove the latter, change variable from $y$ to $s = (y-y')/v$ on the r.h.s., which gives
\begin{equation}
\label{yyp_integral}
\int_{-\io}^{\io} \dd y \int_{-\io}^{\io} \dd y'\, K(y, y')
	\cos \biggl( \frac{\o}{v}(y-y') \biggr)
= v \int_{-\io}^{\io} \dd s\, M(s) \cos(\o s)
\end{equation}
with
\begin{equation}
\label{I_integral}
M(s)
= \int_{-\io}^{\io} \dd y'\, K(vs + y', y')
= \int_{-\io}^{\io} \dd x'\,
		\biggl(
			\frac{v( f^{-1}(vs+f(x')) )}{v} - 1
		\biggr)
		\partial_{x'} \bigl[ -W(x') \bigr],
\end{equation}
where in the last step we changed variable to $x' = f^{-1}(y')$ in the first term and relabeled $y'$ by $x'$ in the second.
Inserting \eqref{I_integral} into \eqref{yyp_integral} and letting $x = f^{-1}(vs + f(x'))$ gives
\begin{multline}
\int_{-\io}^{\io} \dd y \int_{-\io}^{\io} \dd y'\, K(y, y')
	\cos \Bigl( \frac{\o}{v}(y-y') \Bigr) \\
= \int_{-\io}^{\io} \dd x \int_{-\io}^{\io} \dd x'\,
	\left( 1 - \frac{v}{v(x)} \right)
	\partial_{x'} \bigl[ -W(x') \bigr]
	\cos \Bigl( \frac{\o}{v} [f(x)-f(x')] \Bigr),
\end{multline}
which inserted into the r.h.s.\ of \eqref{I_omega_identification} gives the l.h.s.\ since $f(x) - f(x') = \int_{x'}^{x} \dd x''\, v/v(x'')$.


\end{appendices}


\let\oldbibliography\thebibliography
\renewcommand\thebibliography[1]{
  \oldbibliography{#1}
  \setlength{\parskip}{0pt}
  \setlength{\itemsep}{0pt + 0.85ex}
}



\end{document}